\documentclass[11pt,letterpaper]{article}

\usepackage[dvipsnames]{xcolor}
\usepackage{mathtools}
\usepackage{xspace}
\usepackage{booktabs}
\usepackage{authblk}
\usepackage{parskip}

\usepackage[square, semicolon]{natbib}
\usepackage{geometry}
\usepackage[colorlinks = true, linkcolor=NavyBlue,citecolor=ForestGreen]{hyperref}

\usepackage[tt=false]{libertine}

\usepackage{nicefrac}       % compact symbols for 1/2, etc.
\usepackage{microtype}      % microtypography

\usepackage{algorithm}
\usepackage{algpseudocodex} % 

\usepackage{amsmath}
\usepackage{amsthm}
\usepackage{amsfonts}
\usepackage{amssymb}

\usepackage{enumitem}
\usepackage{bm}
\usepackage{setspace}

\theoremstyle{plain}
\newtheorem{theorem}{Theorem}[section]
\newtheorem{lemma}[theorem]{Lemma}
\newtheorem{corollary}[theorem]{Corollary}
\newtheorem{definition}[theorem]{Definition}

\theoremstyle{remark}

\newtheoremstyle{nonitalic}{3pt}{3pt}{}{0pt}{\bfseries}{.}{ }{}
\theoremstyle{nonitalic}

\newcommand{\aaa}{\mathcal{A}}

\newcommand{\perfect}{\text{left-perfect matching}\xspace}

\title{Online Fair Division Meets Reordering Buffers}

\author[1,2]{Georgios Amanatidis}
\author[3]{Giulio Giaconi}
\author[1,2,4]{Evangelos Markakis}
\author[1,2]{Nicos Protopapas}

\affil[1]{Athens University of Economics and Business, Athens, Greece}
\affil[2]{Archimedes/Athena RC, Athens, Greece}
\affil[3]{HSBC Holdings Plc., United Kingdom}
\affil[4]{Input Output Global (IOG), Athens, Greece}

\hypersetup{
pdftitle={Online Fair Division Meets Reordering Buffers},
pdfsubject={},
pdfauthor={Georgios Amanatidis, Giulio Giaconi, Evangelos Markakis, Nicos Protopapas},
pdfkeywords={Online Fair Division, Discrete Fair Division, Mixed Manna, Temporal Envy-Freeness.},
}

\newcommand{\efo}{\ensuremath{\textrm{\MakeUppercase{ef}1}}\xspace}
\newcommand{\ef}{\ensuremath{\textrm{\MakeUppercase{ef}}}\xspace}
\usepackage{tikz}
\newcommand*\circled[1]{\tikz[baseline=(char.base)]{
		\node[shape=circle,draw,inner sep=1pt] (char) {#1};}}
\newcommand*\scircled[1]{\tikz[baseline=(char.base)]{
		\node[shape=circle,draw,inner sep=2.5pt] (char) {#1};}}

\begin{document}
\maketitle

\begin{abstract}
We study the \textit{online fair division} of indivisible \textit{mixed manna} among agents with additive valuation functions. Under the standard online model, at each time step an indivisible item arrives; each agent may assign it a positive, negative, or zero value, and it must be irrevocably allocated, before the arrival of the next item. At the same time, we also wish to maintain some fairness guarantee, and in this work we focus on \textit{envy-freeness} (EF) and one of its most prominent relaxations, \textit{envy-freeness up to one item} (EF1). Given the strong negative and the scarce positive results for this problem without additional assumptions, we augment our algorithms with \textit{buffers} that can store and rearrange a limited number of items. This setting interpolates naturally between the fully online case (no buffer) and the fully offline case (a buffer large enough to hold all items). We show that algorithms equipped with reasonably sized buffers can achieve strong guarantees for personalized $k$-value instances, i.e., instances in which each agent assigns at most $k$ distinct values to items. In particular, we construct allocations that are EF1 at every time step and EF at most time steps, using a buffer of size linear in $k$ and in the number of agents. Our approach relies on novel combinatorial arguments and on constructing a sequence of envy-free matchings that allocates most items. Finally, we extend our results to general additive valuation functions, with a dependence on the largest per-agent ratio between two values of the same sign, and we also identify limitations of our approach via impossibility results on the use of buffers with smaller size.  
\end{abstract}

\section{Introduction}

Our work concerns the fair allocation of indivisible items to a set of interested agents. Fair division has attracted significant interest within the broader algorithmic game theory community, with a sizeable volume of recent literature, as can also be seen by surveys such as \cite{AmanatidisABFLMVW23}, \citet{liu2024mixed}, \cite{10.1145/3505156.3505162} and \cite{Biswas2023}.
The emergence of further motivating applications, including among others 
food donation programs \cite{Mertzanidis0V24}, 
further contributes to the growing momentum of the relevant community.
This has naturally led to a variety of fair division models, dependent on the type of items to be allocated, the type of preferences, but also on possible constraints on the allocation space and the fairness notions of interest.

In this work, we consider an online scenario where the items are not available from the beginning but instead arrive sequentially, one by one. This can be seen as a more realistic model, compared to the more commonly studied offline model, and is motivated by scheduling applications and other problems where resources are released over time. Therefore, an algorithm under this model needs to maintain a partial allocation that is being updated as time progresses, until there are no further arrivals.  
Furthermore, regarding the type of goods, we focus on the most general setting  that is commonly referred to as \emph{mixed manna}, where an item can be  valued either non-negatively (perceived as a \textit{good}) or non-positively (perceived as a \textit{chore}) by an agent.  
Finally, our target fairness notions are \textit{envy-freeness} (EF) and one of its most prominent relaxations, \textit{envy-freeness up to one item} (EF1). Given these considerations, ideally we would  like to have algorithms that maintain temporal fairness, i.e., the allocations they produce are EF or EF1 \textit{in every time step} during their execution.

If we follow the classic model of online algorithms, every time a new item arrives, it must be \emph{irrevocably} allocated to some agent. Unfortunately, such a constraint already makes the problem very challenging. In particular, there are strong impossibility results showing that one cannot hope for any reasonable approximation guarantees on EF1 (see, e.g., \citet{HePPZ19,wang2025online}). There are only scarce positive results for very special cases, and this highlights the limitations that online algorithms are facing for this problem without additional assumptions.

One way to circumvent these impossibilities is by augmenting an algorithm with additional power, which can come in various forms. As a first example, we could assume some limited form of lookahead access, i.e., the algorithm can see the values of the items that will come (say for a small number of future steps), but still needs to take an irrevocable decision on how to allocate the item that arrived in the current step. This can be meaningful especially in applications where we can estimate future values (e.g., via some predictions). As another example, an algorithm may be equipped with a buffer, that can store items, as introduced in the context of online job scheduling \cite{kellerer1997semi}, and referred to as a \emph{reordering buffer}. In this case, there is no need to allocate the currently arriving item right away, but instead we could store items and decide later on how to allocate them. This can be motivated partly by scheduling problems, but also by any other application where one may need to temporarily store resources, so as to produce a better allocation.

\subsection{Our Contribution}

Our work initiates the study 
of using reordering buffers (as per \cite{kellerer1997semi}) in online fair division. We note that the model of buffers interpolates naturally between the fully online case (no buffer) and the fully offline case (a buffer large enough to hold all items). Our main takeaway message is that the use of appropriately sized buffers can greatly help  bypassing the strong negative results of traditional online algorithms. Our main results can be summarized as follows.

\begin{enumerate}[topsep=1pt,leftmargin=18pt,itemsep=0pt]
    \item {\bf Impossibility results for the fully online case and for algorithms with lookahead.} We begin our exposition in Section \ref{sec:impossibility_lookahead}, where we demonstrate that without the use of buffers, there are severe impossibility results. For the fully online case, this is already known even for 3-valued instances. We prove that even with a lookahead  almost as large as the total number of items, we cannot have algorithms that produce temporal-EF1 allocations. Furthermore, even asking for approximate EF1 guarantees is not possible with limited lookahead.  
    \item {\bf Positive results with the use of a buffer.} In Section \ref{sec:main}, we obtain our main positive results. We consider two allocation models, based on whether the algorithm can allocate items in batches or one by one. In both models, we show that for $n$ agents with $k$-valued preferences, a buffer of size $(n-1)k$ suffices to obtain a temporal-EF1 allocation. Our Algorithm \ref{alg:online-buffer} also guarantees that the allocation is EF at least every $n$ steps. Our approach relies on novel combinatorial arguments, leveraging Hall's Theorem for constructing a sequence of envy-free matchings that allocates most items. 
    In Section \ref{sec:general_additive}, we extend our results to general additive valuations for goods or for chores, at the expense of a multiplicative loss, dependent on the ratio between the maximum and minimum value of the items.
    \item {\bf Lower bounds on the necessary buffer size.} In Section \ref{sub:impossibility_buffer}, we investigate whether the size of the buffer in our positive results can be improved. In one of our considered models we show that our result is tight, whereas in the second model, we exhibit that a  dependence of $\sqrt{k}$ is necessary.
\end{enumerate}

\subsection{Further related work}

\noindent \textbf{Mixed manna.} Our main positive results work for the case where each agent might have positive, negative or zero value for an item. This is coined as \emph{mixed manna} by \citet{BogomolnaiaMSY17} and it is known that EF1 allocations always exist in the offline setting \citep{aziz2022fair}. Prior to our work, little was known about mixed manna in the online setting. A notable exception is due to~\citet{ElkindLLNT25} where it is shown that with full lookahead temporal-EF1 allocations exist for two agents.

\noindent \textbf{Online fair division.} 
Online fair division has received increasing attention in recent years, despite important impossibility results. The works closest to ours, one way or another, are those of \citet{HePPZ19}, \citet{CooksonES25}, and \citet{ElkindLLNT25} and \citet{choi2026temporal}.

\citet{HePPZ19} study a model with reallocations, where the objective is to keep their number small. They show that it is impossible to maintain temporal-\efo without  a large number of reallocations. This model, however, does not capture the role of a buffer.
The works of \citet{ElkindLLNT25} and \citet{CooksonES25} formalize the notion of \emph{temporal fairness} and provide results for various special cases. We note that all three previous papers utilize full lookahead, at least for some of their results. 
The recent paper of \citet{choi2026temporal} examines various fairness criteria in their temporal form. More closely related to our work, they allow all items to be delayed up to a time bound, to get improved approximation guarantees---in contrast we allow unbounded delays, but only for a bounded number of items. A similar idea was also used in~\cite{wang2025online} for personalized $2$-valued instances.

Other works use distributional assumptions, randomization, online learning, or additional information such as reliable advice or unreliable predictions~\citep{aleksandrov2020onlinesurvey, benade2024fair,YamadaKAI24,ProcacciaS024,schiffer2025improved,neoh2025online,choo2025approximate,melissourgos2025online}. A related but technically different line considers divisible online items~\citep{GkatzelisPT21,Barman0M22,BanerjeeGGJ22,BanerjeeGHJM023}; the connection is limited, since divisibility makes positive results substantially easier. Other directions study objectives such as maximin share, Nash welfare, and envy-freeness with subsidies~\citep{ZhouBW23,SongTWZ25,kulkarni2025online,wang2025online}, or settings where agents rather than items arrive online~\citep{KalinowskiNW13,KashPS14,ijcai2019p773,SinclairBY21,VardiPF22,BanerjeeHS24,KulkarniMS25}.

\noindent \textbf{Use of a buffer in online algorithms.} 
Buffers have long been used to mitigate the limitations of online decision-making, with applications in scheduling~\citep{dwibedy2022semi,englert2008power,epstein2011max,kellerer1997semi}, web caching~\citep{albers2004new,feder2004combining}, and bin packing~\citep{zheng2015nf}.

%%%%%%%%%%%%%%%%%%%%%%%%%%%%%%%%%%%
\section{Preliminaries}
\label{sec:prelims}
%%%%%%%%%%%%%%%%%%%%%%%%%%%%%%%%%%%

For any $z \in \mathbb{N}_{> 0}$ we use $[z]$ to denote the set $\{1,2,\dots,z\}$.
We consider a set of $n$ agents, $N = [n]$, and a set $M=[m]$ of $m$ \textit{indivisible} items for some $n \in \mathbb{N}_{\geq 2}$ and $m  \in \mathbb{N}$.
A (partial) allocation $\aaa$ in our model is any ordered partition of (a subset of) the items into $n$ subsets, $\aaa = (A_1, \dots, A_n)$, where $A_i$ is the \textit{bundle} of agent $i$. 

We consider agents with additive valuation functions, i.e., each agent $i \in [n]$ associates a value $v_i(\{g\}) \in \mathbb{R}$ for each item $g \in [m]$, and for any given set $A\subseteq M$, $v_i(A)=\sum_{g \in A} v_i(\{g\})$; from this point onward, we will write $v_i(g)$ instead of $v_i(\{g\})$, for the sake of readability.
When the valuation functions take both positive and negative values, we refer to the items of $M$ as \textit{mixed manna}. We  also consider the monotone special cases of having only \emph{goods}, where $v_i(g)\geq 0$ for all $i \in N$ and all $g \in M$, and the respective case of \emph{chores} where $v_i(g) \leq 0$ for all $i \in N$ and all $g \in M$.

A specific restriction with respect to the valuations, which is central in this work, is the following.

\begin{definition}[Personalized $k$-Value Instances]
We say that an instance of the problem is a \emph{personalized $k$-value instance} if for any $i \in N$ the valuation function $v_i$ is additive and there exist real numbers
$\alpha_{i_1} \ge \alpha_{i_2} \ge \dots \ge \alpha_{i_k}$, such that for any $g \in M$, it holds that
$v_i(g) \in \{\alpha_{i_1}, \alpha_{i_2}, \dots, \alpha_{i_k}\}$. When $\alpha_{i_\ell} = \alpha_\ell$ for all $i \in N$ and all $\ell \in [k]$, we call this a \emph{$k$-value instance}.
\end{definition}

In a nutshell, personalized $k$-value instances cover situations where each agent has at most $k$ distinct valuation levels in their valuation function and these valuation levels may be different per agent.

An ideal solution concerning fairness is that no agent prefers another agent's bundle to their own. 

\begin{definition}[Envy-freeness (EF)]
An allocation $\aaa = (A_1, \dots, A_n)$ is \emph{envy-free (EF)} if for every pair of agents $i,j\in N$, it holds that $v_i(A_i) \ge v_i(A_j)$.
\end{definition}

It is well-known that envy-free allocations do not always exist. 
Therefore, several relaxations have been considered as alternative solutions. Among these, the one we focus on in our work is the well-known criterion of EF1, and in particular, its \emph{temporal} form (see Definition \ref{def:temporal}).

\begin{definition}[Envy-freeness up to one item (EF1)]
An allocation $\aaa = (A_1, \dots, A_n)$ is \emph{envy-free up to one item (EF1)} if for every pair of agents $i, j \in N$, either $i$ does not envy $j$, or there exists an item $g \in A_i \cup A_j$ such that $v_i(A_i \setminus \{g\}) \ge  v_i(A_j \setminus \{g\})$.
% \[
% v_i(A_i \setminus \{g\}) \ge  v_i(A_j \setminus \{g\})\,.
% \]
\end{definition}

The notion of \efo, initially introduced by \citet{LMMS04} and formalized by \citet{Budish11} for goods, and then generalized by \citet{AzizCIW22} for mixed manna, captures the fact that any envy agent $i$ has towards agent $j$ can be eliminated by removing either a positively valued item from $A_j$ or a negatively valued item from $A_i$. 
Note that this definition collapses to the standard definitions for EF1 for the goods-only case (see e.g.,~\cite{AmanatidisABFLMVW23}) or the chores-only case (see e.g.,~\cite{guo2023survey}). 
In the presence of goods-only or chores-only instances,
we also consider the natural approximate versions of EF1. 

\begin{definition}[$\rho$-EF1 for goods / chores]
Let $\rho \in (0,1]$. In a goods-only (resp.~chores-only) instance an allocation $\aaa = (A_1, \dots, A_n)$ is $\rho$-EF1 if for every pair of agents $i, j \in [n]$, either $i$ does not envy $j$, or there exists an item $g \in A_j$ (resp.~$g \in A_i$) such that $v_i(A_i ) \ge \rho \cdot v_i(A_j \setminus \{g\})$ (resp.~$\rho \cdot v_i(A_i \setminus \{g\}) \ge v_i(A_j )$).
% \[
% v_i(A_i ) \ge \rho \cdot v_i(A_j \setminus \{g\})\,.
% \]
\end{definition}

%%%%%%%%%%%%%%%%%%%%%%%%%%%%%%%%%%%%%%%%%%%%%%%
\subsection{Online Fair Division}
%%%%%%%%%%%%%%%%%%%%%%%%%%%%%%%%%%%%%%%%%%%%%%%
The most common setting in fair division is \emph{offline}: the whole set $M$ of items is known and available to be allocated immediately. We consider an \emph{online} environment,
where the set of agents is static but the items arrive sequentially: in each time step $t=1,2,\ldots$ the item $g_t$ arrives, and we need to \textit{irrevocably} allocate it to one of the agents, usually \textit{immediately}. The value each agent has for the item becomes known only upon its arrival. 
Given the online nature of the problem, we no longer care primarily for the fairness guarantees of the final, complete allocation, but for the corresponding guarantees \textit{in every time step}, that is, for \emph{temporal fairness} as it was introduced by \citet{ElkindLLNT25,CooksonES25}.

\begin{definition}\label{def:temporal}
Consider a sequence of partial allocations $\mathcal{A}^t = (A_1^t, A_2^t, \dots, A_n^t)$, for $t \in \mathbb{Z}_{\ge 0}$, such that $A_i^t \subseteq A_i^{t+1}$ for any $i \in N$ and any $t \ge 0$.
If $\mathcal{A}^t$ is $\rho$-EF1 for all $t \in \mathbb{Z}_{\ge 0}$, then we say that the sequence of allocations $(\mathcal{A}^t)_{t \ge 0}$ is $\rho$-temporal-EF1.
\end{definition}

We use the simplest notation of temporal-EF1 when $\rho=1$. In some cases, we may have a stronger fairness guarantee for all but the very last time step.

\begin{definition}\label{def:temporal-ef}
	Consider a sequence of partial allocations $\mathcal{A}^t = (A_1^t, A_2^t, \dots, A_n^t)$, for $t \in \{0,1,\ldots,\tau\}$, such that $A_i^t \subseteq A_i^{t+1}$ for any $i \in N$ and any $t \le \tau-1$.
	If $\mathcal{A}^t$ is $\rho$-EF for all $t \le \tau-1$ and $\mathcal{A}^{\tau}$ is $\rho$-EF1, then we say that the sequence is $\rho$-temporal-\ef/\efo.
\end{definition}

We make no distributional assumptions about the arrival of the items, and follow a worst-case analysis. In some of our results, in Section \ref{sec:impossibility_lookahead}, the algorithm can view some of the arriving items ahead of time. We say that an online algorithm is augmented with a \emph{lookahead} of size $\ell$ if at time step $t$ the valuations for the items $g_t,...,g_{t+\ell}$ are revealed. The algorithm still can only allocate the item $g_t$. 

The main enhancement we assume for our online algorithms, is the use of \textit{reordering buffers}. 
A buffer of size $b$ is essentially a set $B$, where we are allowed to temporarily store up to $b$ items to facilitate an online algorithm. The items can be stored for as many time steps as needed but everything must be allocated eventually. Crucially, if the buffer is full, i.e., $|B|=b$, then the algorithm cannot store a new item, unless it immediately allocates at least one item from $B$.
An interesting implication of using buffers is that now it is not always necessary to immediately allocate an item but also it is  possible to allow the algorithm to allocate multiple items at once. With respect to this, we consider two modes of allocation from the pool of available items (i.e., items in the buffer and the newly arrived item):
\begin{itemize}[topsep=-2pt,leftmargin=16pt,itemsep=0pt]
    \item[-] \textbf{Sustained Allocation of Singletons} (SAS): At most one available item can be allocated. This is closer to the practice in the literature of online fair division (without a buffer) and is illustrative of  the challenges of achieving temporal fairness without allocating bundles of items.
    \item[-] \textbf{Deferred Allocation of Batches} (DAB): Any subset of the available items can be allocated at any time. Giving the items in appropriately selected batches turns out to be powerful enough to allow us to obtain very strong fairness guarantees that are not typical in online fair division. 
\end{itemize}
It is easy to see that our model interpolates between standard online fair division, where $b=0$, and offline fair division, where $b\ge m$. Interestingly, for $b\ge m/2$ the problem of obtaining temporal-\efo allocations is relatively easy by essentially reducing the problem to its offline counterpart \citep{ElkindLLNT25} but, below that threshold, utilizing the buffer seems to be completely nontrivial.

%%%%%%%%%%%%%%%%%%%%%%%%%%%%%%%%%%%%%
\section{Impossibility Results Leading to the Use of Buffers}
\label{sec:impossibility_lookahead}
%%%%%%%%%%%%%%%%%%%%%%%%%%%%%%%%%%%%%

In this section, we provide justification on why one needs to go beyond the standard model of online algorithms in order to have fairness guarantees in online fair division. We have already mentioned in the introduction that there are strong impossibility results not only for general additive valuation functions \citep{HePPZ19} but for $k$-value instances as well, even for $k = 3$ \citep{wang2025online}. We restate a parametric version of the latter result, as we are going to refer to that later in Section \ref{sec:general_additive}.

\begin{theorem}[Follows from \citet{wang2025online}]\label{thm:3-value_impossibility}
Let $c>1$ and $\varepsilon>0$. No deterministic online algorithm can always compute $(1/\sqrt{c} + \varepsilon)$-temporal-\efo allocations for $3$-value instances with values $1, \sqrt{c}, c$, or $-c, -\sqrt{c}, -1$, even when $n=2$.
\end{theorem}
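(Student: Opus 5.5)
We prove the statement with an adaptive adversary argument for two agents, first for goods with values $1,\sqrt{c},c$, and then transfer it to chores. Both agents have identical valuations at every step. The adversary watches the deterministic algorithm, which must place each arriving item immediately. We aim to force a time step at which some agent $i$ satisfies $v_i(A_i) < (1/\sqrt{c}+\varepsilon)\, v_i(A_j\setminus\{g\})$ for every $g\in A_j$. Since both agents are identical and each item is given away at once, the state is described by the two bundle values and each bundle's largest item. The adversary tracks the \emph{leader}: the agent whose bundle is currently worth more to both agents.

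\textbf{Phase 1: forcing an imbalance.} The adversary first releases a stream of value-$1$ items. The algorithm has essentially two options.
\begin{itemize}
\item If it keeps giving items to one agent, the other agent's envy after removing one item grows without bound. Temporal-EF1 with the target ratio then fails after $O(1)$ items.
\item So it must roughly alternate. Then the adversary releases a $c$-valued item. Whoever receives it becomes the leader $j$, with a single item worth $c$ on top of a roughly balanced base.
\end{itemize}
Removing that single $c$ item restores balance, so EF1 still holds at this point. The role of the middle value $\sqrt{c}$ is to break this.

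\textbf{Phase 2: exploiting the middle value.} The adversary now sends $\sqrt{c}$-valued items. The trailing agent $i$ needs roughly $\sqrt{c}$ of these items, worth $c$ in total, to compensate for the leader's $c$ item. While $i$ collects them, any $\sqrt{c}$ item given to $j$ makes $i$'s envy-up-to-one-item worse. The adversary adapts on each decision:
\begin{itemize}
\item If the algorithm gives a $\sqrt{c}$ item to $j$, the adversary continues so that $j$'s bundle, minus its best item, accumulates $\sqrt{c}$-valued surplus.
\item If the algorithm gives items to $i$, then $i$ eventually overtakes. The adversary then switches to a new $c$-item or to value-$1$ items, making $j$ the envious agent. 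Now $j$'s deficit is measured in units of $\sqrt{c}$ while the removable item is only of size $1$ or $\sqrt{c}$.
\end{itemize}
Tracking the ratio $v_i(A_i)/v_i(A_j\setminus\{g^*\})$ against its bound, the two branches each push this ratio towards at most $1/\sqrt{c}$. The quantity we aim to drive below $1/\sqrt{c}+\varepsilon$ is the ratio of a "lost" $\sqrt{c}$ item to a $c$ item, or of a lost $1$ item to a $\sqrt{c}$ item. Choosing the stream lengths as a function of $\varepsilon$, the additive base terms become negligible relative to the leading scale, and the ratio drops below $1/\sqrt{c}+\varepsilon$.

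\textbf{Chores and the main obstacle.} For chores with values $-c,-\sqrt{c},-1$, we negate the construction: the roles of giving and receiving swap, and the $\rho$-EF1 condition for chores is the mirror image. This follows the same template as the construction of \citet{wang2025online}, and the statement here is its parametric form.

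The main obstacle is the case analysis in Phase 2. We must ensure that every possible response of the algorithm leads, within bounded time, to a violation with the stated constant. In particular, the base created by the value-$1$ items must not give the algorithm slack that pushes the ratio above $1/\sqrt{c}$. I would handle this by scaling the lengths of the value-$1$ and $\sqrt{c}$ phases so that the slack is $o(1)$ relative to the leading terms, and by invoking the original analysis of \citet{wang2025online} to confirm the constants.
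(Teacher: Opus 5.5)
Your proof has a structural gap: you give both agents \emph{identical} valuations, and against identical valuations no adversary can force any EF1 violation at all. Consider the deterministic algorithm that gives each arriving good to the agent whose bundle currently has the smaller common value (for chores, to the agent whose bundle currently has the larger value). It keeps the allocation exactly EF1 at every step. If the recipient is still weakly behind afterwards, only the lagging bundle has changed, so the previous one-item guarantee persists. If the recipient overtakes, removing the item she just received puts her back weakly below the other agent, so the other agent's envy disappears after removing one item.

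This is exactly what breaks your Phase 2. This algorithm never hands a $\sqrt{c}$ item to the leader. When the trailing agent overtakes, the overtaking item is the removable one, and subsequent items go to the other agent; the same happens after any new $c$-item or value-$1$ items. Tuning stream lengths as a function of $\varepsilon$, or deferring the constants to the cited analysis, cannot fix this, because the violation you aim to force never occurs for this algorithm.

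The missing idea is that the agents must \emph{disagree} on item values; once they do, three items suffice. The paper gives no proof beyond the citation. However, the $b=0$ base case in the proof of Theorem~\ref{thm:buffer-impossibility-single_item} is exactly such an instance: after scaling by $c$, its values $1/c,1,c$ become $1,\sqrt{C},C$ with $C=c^2$, and its ratio $1/c$ equals $1/\sqrt{C}$.

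Directly, write $s=\sqrt{c}$ and proceed in three steps.
\begin{enumerate}
\item Let $g_1$ be worth $s$ to both agents; after relabeling, agent 1 gets it.
\item Let $v_1(g_2)=c$ and $v_2(g_2)=1$. If agent 1 also gets $g_2$, agent 2 has value $0$, while $A_1$ minus any single item is still worth at least $1$ to her. So $\rho$-EF1 fails for every $\rho>0$, and $g_2$ must go to agent 2.
\item Let $v_1(g_3)=c$ and $v_2(g_3)=s$. If agent 1 gets $g_3$, agent 2 compares $1$ with $v_2(A_1\setminus\{g\})=s$. If agent 2 gets it, agent 1 compares $s$ with $v_1(A_2\setminus\{g\})=c$.
\end{enumerate}
Either way the ratio is exactly $1/\sqrt{c}$, so $(1/\sqrt{c}+\varepsilon)$-EF1 fails at step $3$.

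For chores, use the items $(-s,-s)$, $(-1,-c)$, $(-c,-c)$, listing the values for agents 1 and 2. Taking $g_2$ would leave agent 1 with two chores against an empty bundle. Then $g_3$ forces ratio $1/s$ for agent 1 or $s/c=1/s$ for agent 2.
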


For the proof of Theorem \ref{thm:3-value_impossibility} one needs fairly simple instances that exploit an algorithm's lack of knowledge of the future. One possible remedy for this that was recently introduced in online fair division is the (partial) knowledge of the future, see, e.g., \citep{HePPZ19,ElkindLLNT25,amanatidis2025online}.
Here we first show that  in order to have any hope to achieve a nontrivial  guarantee, one needs a lookahead of \textit{nearly} $k$ future items, where $k$ is the number of distinct value levels in the instance.

\begin{theorem}\label{thm:k-value_impossibility_lookahead}
Let $c\ge 2$ and $\varepsilon>0$. No deterministic online algorithm with a lookahead $\ell \le k-3$ can always compute $(1/c  +\varepsilon)$-temporal-\efo allocations for $k$-value instances, even when $n=2$ and all items are only goods or only chores.
\end{theorem}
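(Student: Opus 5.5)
We would build an adaptive adversary for two agents with identical valuations. For goods the values are $1, c, c^2, \dots, c^{k-1}$; these are $k$ distinct levels, so the instance is a $k$-value instance. The chores case should follow from the same construction with all values negated, by symmetry of the $\rho$-EF1 definitions. Because the two agents are identical, we may assume without loss of generality that the first item goes to agent $1$. The idea is that a lookahead of $\ell \le k-3$ reveals at most $k-2$ items at a time. The adversary therefore always has at least two value levels that the algorithm has not yet seen, and it can use them to punish whichever decision the algorithm just made.

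\textbf{Step 1 (forced alternation).} First we show that along an increasing sequence the algorithm has essentially no freedom. Suppose the items arrive with values $c^0, c^1, c^2, \dots$ in this order, and after step $t$ the agent holding $g_t$ (value $c^{t-1}$) also receives $g_{t+1}$ (value $c^t$). With $c \ge 2$, the other agent's bundle is worth at most $\sum_{j \le t-2} c^j < c^{t-1}$. The holder's bundle, after removing its single most valuable item $g_{t+1}$, is still worth at least $c^{t-1}$. We would check that this violates $(1/c+\varepsilon)$-EF1. Perhaps the adversary must first insert a few repeated low-value items so that the other agent's bundle is really at most about $c^{t-2}$; then the ratio is at most $1/c$.

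\textbf{Step 2 (the two continuations).} At a time $t$, the algorithm's view consists of $g_t, \dots, g_{t+\ell}$. The adversary keeps two continuations that agree on this window and differ only on items the algorithm has not yet seen:
\begin{itemize}
\item[(a)] keep escalating to the top level $c^{k-1}$, which is reachable only after more than $\ell$ further steps because $\ell \le k-3$;
\item[(b)] stop escalating and send items of a low value (value $1$, or value $0$ if that level is available).
\end{itemize}
We then argue the following. In continuation (a), Step 1 forces the algorithm to alternate along the escalation. In continuation (b), the agent who currently holds less would have to receive the next items. Since the algorithm cannot distinguish (a) from (b) when it allocates $g_t$, one of the two continuations makes the current allocation of $g_t$ wrong by a factor of $c$. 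Concretely, the envious agent's bundle is below $(1/c+\varepsilon)$ times the other bundle minus one item. The requirement $\ell \le k-3$ is what gives this gap of at least two unseen levels, one to escalate and one spare to create the factor-$c$ deficit.

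\textbf{Main obstacle.} The hardest part is making the bookkeeping of bundle values tight enough that the violation exceeds exactly the threshold $1/c+\varepsilon$ for every $c \ge 2$, not merely for large $c$. Geometric sums give a ratio close to $1/(c-1)$, not $1/c$. We would first try to fix this by padding the sequence with repeated copies of the same level, which keeps the number of distinct values at $k$, so that the envious agent's bundle is dominated by a single item one level down. If that fails, we would replace the geometric ladder by levels chosen so that consecutive partial sums differ by exactly a factor of $c$.
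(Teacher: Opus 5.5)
\textbf{There is a genuine gap, and it is structural.} An adversary that keeps the two agents' valuations identical can never force a violation of $(1/c+\varepsilon)$-EF1 once $1/c+\varepsilon\le 1$, and the theorem must hold for such $\varepsilon$.

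With a common additive valuation $v$, consider the greedy rule that gives each good to an agent whose current bundle has the smaller $v$-value, and each chore to one whose bundle has the larger value. This rule maintains exact EF1 at every step with no lookahead at all. For goods, suppose $v(A_1)\le v(A_2)$ and $g$ goes to agent~1. Then $v((A_1\cup\{g\})\setminus\{g\})=v(A_1)\le v(A_2)$, so agent~2 is EF1 toward agent~1, while agent~1's condition only improves. Chores are symmetric.

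On your ladder $1,c,c^2,\dots$ this rule alternates, so it is consistent with Step~1. In continuation (b) it hands the low items to the poorer agent, which is exactly what you say (b) requires. Hence the two continuations in Step~2 never conflict, and the claim that one of them makes the allocation of $g_t$ wrong by a factor $c$ cannot be proved.

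Your ``main obstacle'' is also misdiagnosed. Under forced alternation the bundles satisfy the exact identity $v(\{g_1,g_3,\dots,g_i\})=c\cdot v(\{g_2,\dots,g_{i-1}\})+1$ for odd $i$, and $v(\{g_2,\dots,g_i\})=c\cdot v(\{g_1,\dots,g_{i-1}\})$ for even $i$. So a double assignment already gives ratio at most $1/c$, with no padding needed.

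\textbf{The missing idea is that the valuations must diverge after the window visible when $g_1$ is allocated.} In the paper, alternation is forced on $g_1,\dots,g_{\ell+1}$, which carry common values $c^0,\dots,c^{\ell}$. This alternation is already determined by who receives $g_1$, so the adversary can then send two crossed goods.
\begin{itemize}
\item The first is worth $c^{\ell+1}$ to the agent who must receive it to avoid a double assignment, and $c^{\ell+4}$ to the other agent.
\item The second has the roles mirrored.
\end{itemize}
Each agent now sees a good worth $c^{\ell+4}$ in the other's bundle, while her own bundle is worth less than $c^{\ell+3}$. A final good worth $c^{\ell+4}$ to both therefore violates $(1/c+\varepsilon)$-EF1 for whichever agent does not receive it, because EF1 removes only one of the two large goods.

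This uses the $\ell+3\le k$ values $1,c,\dots,c^{\ell},c^{\ell+1},c^{\ell+4}$, which is where $\ell\le k-3$ enters. For chores, simply negating does not work: the crossed magnitudes must be switched, so that each agent holds a chore she values at $-c^{\ell+4}$ and the final such chore leaves someone with two of them.
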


\begin{proof}
    We are going to show in full detail the case where all items are goods; the case where everything is a chore is very similar and we are only going to highlight the differences.

Suppose we have a deterministic allocation algorithm $\mathcal{A}$ with a lookahead of size $\ell \le k-3$ that always computes a $(1/c  +\varepsilon)$-temporal-\efo allocation when given a $k$-value instance with only goods.
We are going to construct a sequence of items that forces $\mathcal{A}$ to fail to produce such an allocation within at most $\ell + 4$ time steps. Moreover, this sequence will use at most $k$ distinct values, leading to a contradiction.

Consider first the goods $g_1, g_2, \ldots, g_{\ell+1}$, such that 
$v_1(g_i) = v_2(g_i) = c^{i-1}$ for $i\in [\ell+1]$, as shown below; the vertical line indicates the end of the initial view of algorithm $\mathcal{A}$ (initial item, $g_1$, and $\ell$ additional items). 
% \begin{table}[h!]
	\begin{center}  
	\begin{tabular}{lcccccc|c}
		& $g_1$ & $g_2$ & $g_3$   &\ldots & $g_{\ell}$ & $g_{\ell+1}$  & \ldots \\[5pt]
		agent 1: \hspace{1em} & $1$ & $c$ & $c^2$ &  \ldots & $c^{\ell-1}$ & $c^{\ell}$ & \ldots \\
		agent 2: \hspace{1em} & $1$ & $c$ & $c^2$ & \ldots & $c^{\ell-1}$ & $c^{\ell}$ & \ldots 
	\end{tabular}
	\end{center}
% \end{table}
Due to the symmetry of the visible part of the instance, it is without loss of generality to assume that good $g_1$ gets allocated to agent 1. We claim that once this happens, the algorithm must alternate between the two agents, thus giving all the odd-indexed goods up to (and including) $g_{\ell+1}$ to agent 1 and all the even-indexed ones to agent 2. 
Indeed, suppose this is not the case. Then there are two consecutive goods given to the same agent; let $g_i, g_{i+1}$ be the first goods for which this happens. If $i$ is odd, then $g_i, g_{i+1}$ are both given to agent 1 and we have 
\begin{align*}
v_2(A_1^{i+1}) &= v_2(\{g_1, g_3, \ldots, g_i, g_{i+1}\}) = \sum_{j=0}^{(i-1)/2} \!\!c^{2j}  +c^i = 1+ c\cdot\!\! \!\sum_{j=1}^{(i-1)/2} \!\!c^{2j-1}  +c^i \\ 
&= c\cdot v_2(\{g_2, g_4, \ldots, g_{i-1}\})  +c^i  +1 = c\cdot v_2(A_2^{i+1})  +c^i +1 \,.
\end{align*}
That is, the allocation $(A_1^{i+1},A_2^{i+1})$ is not even $1/c$-\efo, contradicting the choice of $\mathcal{A}$. 
If $i$ is even, then $g_i, g_{i+1}$ are both given to agent 2 and we have 
\begin{align*}
v_1(A_2^{i+1}) &= v_1(\{g_2, g_4, \ldots, g_i, g_{i+1}\}) = \sum_{j=1}^{i/2} c^{2j-1}  +c^i = c\cdot\!\!\! \sum_{j=0}^{(i-2)/2} \!\!c^{2j}  +c^i \\ 
&= c\cdot v_1(\{g_1, g_3, \ldots, g_{i-1}\})  +c^i   = c\cdot v_1(A_1^{i+1})  +c^i \,.
\end{align*}
That is, the allocation $(A_1^{i+1},A_2^{i+1})$ is only $1/c$-\efo, again contradicting the choice of $\mathcal{A}$. 
We conclude that algorithm $\mathcal{A}$ allocates the first $\ell+1$ goods so that $A_1^{\ell+1} = \{g_1, g_3, \ldots\}$ and $A_2^{\ell+1} = \{g_2, g_4, \ldots\}$. The next 3 goods depend on the parity of $\ell$:
% \begin{table}[h!]
\begin{center}  
\begin{tabular}{lccccccccccc}
	                    &  & $g_{\ell+1}$ & $g_{\ell+2}$   & $g_{\ell+3}$ & $g_{\ell+4}$ &   &  & $g_{\ell+1}$ & $g_{\ell+2}$   & $g_{\ell+3}$ & $g_{\ell+4}$ \\[5pt]
agent 1:  &\ldots &  $c^{\ell}$ & $c^{\ell+1}$ &  $c^{\ell+4}$ & $c^{\ell+4}$ & \ \ or\ \  & \ldots  & \scircled{$c^{\ell}$} &$c^{\ell+4}$ & $c^{\ell+1}$ & $c^{\ell+4}$ \\
agent 2:  &\ldots &  \scircled{$c^{\ell}$} &$c^{\ell+4}$ & $c^{\ell+1}$ & $c^{\ell+4}$ &  & \ldots & $c^{\ell}$ & $c^{\ell+1}$ &  $c^{\ell+4}$ & $c^{\ell+4}$ 
\end{tabular}
\end{center}
% \end{table}
We are going to analyze the case on the left, where $\ell$ is odd. The case where $\ell$ is even on the right, although not exactly symmetric, is completely analogous.  

Using the exact same calculations as above, we get that if $g_{\ell+2}$ was given to agent 2 we would have $v_1(A_2^{\ell+2}) = c\cdot v_1(A_1^{\ell+2})  + v_1(g_{\ell+2})$, i.e., the allocation $(A_1^{\ell+2},A_2^{\ell+2})$ would only be $1/c$-\efo; so $g_{\ell+2}$ is given to agent 1. Similarly, if $g_{\ell+3}$ was given to agent 1 we would have $v_2(A_1^{\ell+3}) = c\cdot v_1(A_1^{\ell+3})  + v_2(g_{\ell+3}) + 1$, i.e., the allocation $(A_1^{\ell+3},A_2^{\ell+3})$ would not even be $1/c$-\efo; so $g_{\ell+3}$ is given to agent 2. 
Now, whoever gets $g_{\ell+4}$, the resulting allocation is at most $1/c$-\efo. To see this, right before $g_{\ell+4}$ is allocated, we have 
\[v_1(A_1^{\ell+3}) = \sum_{j=0}^{(\ell+1)/2} \!\!c^{2j} = \frac{c^{\ell+3}-1}{c^2 - 1} \le \frac{c^{\ell+3}}{3} < \frac{1}{c}\, v_1(g_{\ell+4}) \le \frac{1}{c}\, v_1(A_2^{\ell+3}) \,,\]
and
\[v_2(A_2^{\ell+3}) = \sum_{j=1}^{(\ell+1)/2} \!\!c^{2j-1} + c^{\ell+1} = c\,\frac{c^{\ell+1}-1}{c^2 - 1} + c^{\ell+1}  \le \frac{2c^{\ell+3}}{3} < \frac{1}{c}\, v_2(g_{\ell+4}) \le \frac{1}{c}\, v_2(A_1^{\ell+3}) \,, \]
where the first inequality in each case follows from the fact that $c\ge 2$ and from simple calculations. In any case, by the time $g_{\ell+4}$ is given, the allocation fails to be $(1/c  +\varepsilon)$-\efo.

In the case of chores the construction is very similar, starting with 
% \begin{table}[h!]
	\begin{center}  
	\begin{tabular}{lcccccc}
		& $g_1$ & $g_2$ & $g_3$   &\ldots & $g_{\ell}$ & $g_{\ell+1}$  \\[5pt]
		agent 1: \hspace{1em} & $-1$ & $-c$ & $-c^2$ &  \ldots & $-c^{\ell-1}$ & $-c^{\ell}$  \\
		agent 2: \hspace{1em} & $-1$ & $-c$ & $-c^2$ &  \ldots & $-c^{\ell-1}$ & $-c^{\ell}$ 
	\end{tabular}
	\end{center}
% \end{table}
and arguing as before we show that algorithm $\mathcal{A}$ gives
all the odd-indexed chores up to (and including) $g_{\ell+1}$ to agent 1 and all the even-indexed ones to agent 2. (We now look at $v_1(A_1^{i+1})$ and $v_2(A_2^{i+1})$ instead of $v_2(A_1^{i+1})$ and $v_1(A_2^{i+1})$ but the calculations are essentially the same.)
The main difference is that the values of the additional chores are switched in the following sense:
% \begin{table}[h!]
\begin{center}  
\begin{tabular}{lccccccccccc}
	             &       & $g_{\ell+1}$ & $g_{\ell+2}$   & $g_{\ell+3}$ & $g_{\ell+4}$ &  &    & $g_{\ell+1}$ & $g_{\ell+2}$   & $g_{\ell+3}$ & $g_{\ell+4}$ \\[5pt]
agent 1: &$\cdots$   &  $-c^{\ell}$ & $-c^{\ell+4}$ &  $-c^{\ell+1}$ & $-c^{\ell+4}$ & \ \ or\ \  & $\cdots$   & \circled{$-c^{\ell}$} &$-c^{\ell+1}$ & $-c^{\ell+4}$ & $-c^{\ell+4}$ \\
agent 2:  & $\cdots$ &  \circled{$-c^{\ell}$} &$-c^{\ell+1}$ & $-c^{\ell+4}$ & $-c^{\ell+4}$ &  & $\cdots$   & $-c^{\ell}$ & $-c^{\ell+4}$ &  $-c^{\ell+1}$ & $-c^{\ell+4}$ 
\end{tabular}
\end{center}
% \end{table}
Still, like before, in the case on the left, where $\ell$ is odd (the case where $\ell$ is even being again completely analogous), $g_{\ell+2}$ is given to agent 1,  $g_{\ell+3}$ is given to agent 2 and no matter who gets $g_{\ell+4}$ the final allocation is at most $1/c$-\efo. 
To see the latter, note that
\[v_1(A_2^{\ell+3}) = \sum_{j=1}^{(\ell+1)/2} \!\!-c^{2j-1} - c^{\ell+1} = -c\,\frac{c^{\ell+1}-1}{c^2 - 1} - c^{\ell+1}  \ge -\frac{2c^{\ell+3}}{3} > \frac{1}{c}\, v_2(g_{\ell+4}) \ge \frac{1}{c}\, v_1(A_1^{\ell+3}) \,, \]
and
\[v_2(A_1^{\ell+3}) = \sum_{j=0}^{(\ell+1)/2} \!\!-c^{2j} = -\frac{c^{\ell+3}-1}{c^2 - 1} \ge -\frac{c^{\ell+3}}{3} >  \frac{1}{c}\, v_1(g_{\ell+4}) \ge \frac{1}{c}\, v_2(A_2^{\ell+3}) \,,\]
so, by the time $g_{\ell+4}$ is given, the allocation fails to be $(1/c  +\varepsilon)$-\efo.
\end{proof}

One might assume that the impossibility stems from the fact that the horizon of the instance is comparable to $\ell$ and/or $k$. However, when one primarily cares for exact temporal-\efo allocations, as is the case here, impossibility results persist even when $m\gg k$ and most of the future information is known up front.
Theorem \ref{thm:3-value_impossibility_lookahead} is somewhat surprising, given that when the whole sequence can be seen from the beginning (i.e., when $\ell = m - 1$), it is known that a temporal-\efo allocation can  always be computed for two agents, even when they have general additive valuation functions \citep{HePPZ19}.

\begin{theorem}\label{thm:3-value_impossibility_lookahead}
No deterministic online algorithm with a lookahead $\ell \le m-4$ can always compute a temporal-\efo allocation for $3$-value instances, even when $n=2$ and all items are only goods or only chores.
\end{theorem}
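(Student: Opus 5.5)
The plan is an adaptive adversary argument for two agents, as in the proof of Theorem~\ref{thm:k-value_impossibility_lookahead}, and it exploits one fact. With lookahead $\ell\le m-4$, the algorithm sees only $g_1,\dots,g_{m-3}$ when it must place $g_1$. The three final items $g_{m-2},g_{m-1},g_m$ are revealed only at time steps $2,3,4$. So the adversary can pick the tail after seeing how $g_1$ (and then $g_2,g_3$) were allocated. I will do the goods case in detail; the chores case should follow by negating all values and swapping the roles of $v_i(A_i)$ and $v_i(A_j)$, as in the earlier proof. I will use three values, say $0<1<c$ for a suitable $c$, possibly with the agents valuing the tail items differently while staying within the same three values.

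First, I need a prefix $g_1,\dots,g_{m-3}$ whose allocation is essentially forced by temporal-\efo once $g_1$ is placed. By symmetry of the visible part, I may assume agent~1 receives $g_1$. The fully flexible prefix of identical unit goods is useless here. There \efo only forces the bundle sizes to differ by at most one, and the algorithm may choose freely who is ahead after each odd step. Since it sees the tail before the prefix ends, it could then adapt, and the bound of \citet{HePPZ19} for full lookahead shows that this freedom must be removed. So I would copy the rigidity mechanism of Theorem~\ref{thm:k-value_impossibility_lookahead}. I would interleave items of value $1$ and $c$ (both agents agreeing) so that, as in the alternation argument there, giving two consecutive items of a certain kind to the same agent breaks \efo. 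The goal is a lemma of the form: \emph{under temporal-\efo, after $g_1$ goes to agent~1, the final prefix allocation $(A_1,A_2)$ is uniquely determined, or is determined up to which agent is ``ahead'' in a way fixed by $g_1$.} I would aim for an end state in which agent~1 is strictly ahead by roughly one item of value $c$, and the \efo slack is exactly tight for agent~2.

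Second, at time $2$, knowing that agent~1 got $g_1$, the adversary sets the three tail items. It uses the same template as the end of the proof of Theorem~\ref{thm:k-value_impossibility_lookahead}. The item $g_{m-2}$ must go to the agent behind, or \efo fails immediately. The item $g_{m-1}$ is valued so that it must go to the other agent. The item $g_m$ is valued $c$ by both agents, and whoever receives it produces a pair violating \efo. This needs checking with only the values $\{0,1,c\}$ available. Since the target is exact \efo rather than an approximation, a single unit of surplus envy after removing the best item suffices. So small values such as $c=2$ should be enough, and the $c^{\ell+4}$ blow-up of the earlier proof is unnecessary.

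The main obstacle is the first step: designing a prefix whose allocation is rigid despite the algorithm seeing almost all of it, including the tail once $g_{m-3}$ is reached. Rigidity must come purely from the every-step \efo constraints, not from ignorance. I would try first to use runs of the large value $c$ separated by unit items. Each $c$-item then has a forced recipient given the current envy direction, so the parity set by $g_1$ propagates to the end of the prefix. If $m$ is small, so that the prefix is very short, I would handle it as a base case directly. There the algorithm sees barely anything beyond $g_1$, and the tail alone gives the contradiction.
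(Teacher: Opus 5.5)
Your architecture matches the paper's. The prefix $g_1,\dots,g_{m-3}$ is symmetric between the agents, so after relabeling agent~1 gets $g_1$ and every later prefix allocation is forced. An asymmetric three-item tail, fixed by the adversary after seeing who got $g_1$, then traps the algorithm. But the proposal never pins down the instance, which is the whole content of the proof, and both concrete choices you commit to fail.

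\textbf{The prefix.} Interleaving unit items destroys rigidity. After an initial unit item, a run of $c$-items makes the imbalance alternate between ``agent~1 ahead by $1$'' and ``agent~2 ahead by $c-1$''. A unit item arriving in the second state can go to either agent. Giving it to agent~2 raises agent~1's envy to exactly $c$, which removing a $c$-item cancels. With your suggested $c=2$, giving it to agent~1 even produces a perfectly balanced allocation, after which the next $c$-item may go to either agent. The algorithm, which by then already sees the tail, simply chooses who is ahead. The rigid prefix is a single unit item followed \emph{only} by $c$-items, so the imbalance never returns to $0$.

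\textbf{The tail.} Your template fails with values in $\{0,1,c\}$ from the end state you aim for, where agent~2's envy is exactly $c$. Once $g_{m-2}$ goes to agent~2, agent~1's envy is $v_1(g_{m-2})-c\le 0$. If $g_{m-1}$ also goes to agent~2, removing the more valuable of the two new items leaves agent~1's envy at most $\min\{v_1(g_{m-2}),v_1(g_{m-1})\}-c\le 0$. So $g_{m-1}$ can never be forced to agent~1, and the template breaks at its second step. This matters because $g_m$ is unallocatable only if both agents strictly envy each other just before it: the agent not receiving $g_m$ can always remove it.

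\textbf{The missing idea.} The tail must live on a \emph{higher} value level than the prefix, and the prefix should leave only a \emph{small} imbalance, not a tight one. The paper uses the values $1,c,c^2$ with $c\ge 3$. The prefix is $g_1$ worth $1$ and $g_2,\dots,g_{m-3}$ worth $c$ to both agents; for even $m$ it ends with agent~1 ahead by just $1$. The tail gives values $(c^2,c)$, $(c,c^2)$, $(c^2,c^2)$ to agents $(1,2)$, with the first two swapped for odd $m$. This forces $g_{m-2}$ to agent~2 and $g_{m-1}$ to agent~1, leaving envies $c^2-c-1>0$ and $c^2-c+1>0$. Whoever then receives $g_m$ violates EF1. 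The value $0$ plays no role, and no separate small-$m$ case is needed: for $m=4$ the prefix is just $g_1$.
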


\begin{proof}%[\textbf{Proof of Theorem \ref{thm:3-value_impossibility_lookahead}}]
Like in the proof of Theorem \ref{thm:k-value_impossibility_lookahead}, we present in full detail the case where all items are goods; the case of chores is very similar and we will highlight the differences at the end of the proof.

Suppose we have a deterministic allocation algorithm $\mathcal{A}$ with a lookahead of size $\ell = m-4$ that always computes a temporal-\efo allocation when given a $3$-value instance with goods. (Note that the case where $\ell < m-4$ is covered, in the sense that an algorithm can always simulate a smaller lookahead  by just ignoring some of the future values it sees.)
We are going to construct a sequence of items that eventually forces $\mathcal{A}$ to fail to produce such an allocation. Moreover, this sequence will use at most $3$ distinct values, $1, c, c^2$, for $c\ge3$, leading to a contradiction.

Assume first that $m$ is even. Consider  the goods $g_1, g_2, \ldots, g_{m-3}$, such that $v_1(g_1) = v_2(g_1) = 1$ and 
$v_1(g_i) = v_2(g_i) = c$ for $i\in \{2, 3, \ldots, m-3\}$, as shown below, followed by the last 3 goods, $g_{m-2}, g_{m-1}, \allowbreak g_{m}$, such that $v_1(g_{m-2}) = v_1(g_{m}) = v_2(g_{m-1}) = v_2(g_{m}) = c^2$ and $v_2(g_{m-2}) = v_1(g_{m-1}) = c$.
The vertical line indicates the end of the initial view of algorithm $\mathcal{A}$ (initial item, $g_1$, and $m-4$ additional items). 
% \begin{table}[h!]
	\begin{center}  
	\begin{tabular}{lcccccc|ccc}
		& $g_1$ & $g_2$ & $g_3$   &\ldots & $g_{m-4}$ & $g_{m-3}$  & $g_{m-2}$ & $g_{m-1}$ & $g_{m}$ \\[5pt]
		agent 1: \hspace{1em} & $1$ & $c$ & $c$ &  \ldots & $c$ & $c$ & $c^2$ & $c$ & $c^2$ \\
		agent 2: \hspace{1em} & $1$ & $c$ & $c$ & \ldots & $c$ & $c$ &  $c$ & $c^2$ & $c^2$ 
	\end{tabular}
	\end{center}
% \end{table}
We are now going to argue similarly to the proof of Theorem \ref{thm:k-value_impossibility_lookahead}.

Due to symmetry, it is without loss of generality to assume that $g_1$ gets allocated to agent 1. We claim that once this happens, the algorithm must alternate between the two agents, giving all the odd-indexed goods up to (and including) $g_{m-1}$ to agent 1 and all the even-indexed ones to agent 2. 
To see this, suppose this is not the case. Then there are two consecutive goods given to the same agent; let $g_i, g_{i+1}$ be the first goods for which this happens. If $i$ is odd, then $g_i, g_{i+1}$ are both given to agent 1 and we have 
\[
v_2(A_1^{i+1}) = v_2(\{g_1, g_3, \ldots, g_i, g_{i+1}\})  
= v_2(\{g_2, g_4, \ldots, g_{i-1}\})  +c  +1 =  v_2(A_2^{i+1})  +c +1 \,.
\]
That is, the allocation $(A_1^{i+1},A_2^{i+1})$ is not \efo, contradicting the choice of $\mathcal{A}$. 
If $i$ is even, then $g_i, g_{i+1}$ are both given to agent 2. Note that this includes the extreme case where $i=m-2$; in this case the exponent $x$ below is equal to $2$, whereas in any other case it is equal to $1$: 
\begin{align*}
v_1(A_2^{i+1}) &= v_1(\{g_2, g_4, \ldots, g_i, g_{i+1}\}) = \frac{i\,c}{2}  +c^x  = \frac{i\,c}{(i-2)c+2}\cdot v_1(\{g_1, g_3, \ldots, g_{i-1}\})  +c^x\\ 
&   = \frac{i\,c}{(i-2)c+2}\cdot v_1(A_1^{i+1})  +c^x < v_1(A_1^{i+1})  +c^x \,
\end{align*}
where the third equality follows by explicitly calculating the value of $v_1(\{g_1, g_3, \ldots, g_{i-1}\}) = (i/2 -1) c +1$ and the last inequality follows from the fact that $c>2$. 
As a result, the allocation $(A_1^{i+1},A_2^{i+1})$ is only $\frac{(i-2)c+2}{i\,c}$-\efo, again contradicting the choice of $\mathcal{A}$, since $\frac{(i-2)c+2}{i\,c}<1$ for $i \ge 2$. 
We conclude that algorithm $\mathcal{A}$ allocates the first $m-1$ goods so that $A_1^{m-1} = \{g_1, g_3, \ldots,g_{m-1}\}$ and $A_2^{m-1} = \{g_2, g_4, \ldots,g_{m-2}\}$.

Now, whoever gets $g_{m}$, the resulting allocation fails to be \efo. To see this, right before $g_{m}$ is allocated, we have 
\begin{align*}
v_1(A_1^{m-1}) &= \big(\frac{m}{2} - 1\big)\,c + 1 = \frac{(m-2)c + 2}{(m-4)c +2c^2} \Big[ \big(\frac{m}{2} - 2\big)\,c + c^2 \Big] \\ &= \frac{(m-2)c + 2}{(m-4)c +2c^2} \, v_1(A_2^{m-1}) < v_1(A_2^{m-1}) \,,   
\end{align*}
where the last inequality follows from the fact that $2c^2  > 2c + 2$ for $c > 2$, 
and
\begin{align*}
v_2(A_2^{m-1}) &= \big(\frac{m}{2} - 1\big)\,c = \frac{(m-2)c}{(m-4)c +2c^2+2} \Big[ \big(\frac{m}{2} - 2\big)\,c + c^2 + 1 \Big] \\ &= \frac{(m-2)c}{(m-4)c +2c^2 + 2} \, v_2(A_1^{m-1}) < v_2(A_1^{m-1}) \,,   
\end{align*}
where the last inequality follows from the fact that $2c^2  + 2> 2c$ for any $c$. In any case, by the time $g_{m}$ is given, the allocation fails to be \efo.

When $m$ is odd, the only difference is that the values of 
$g_{m-2}$ and $g_{m-1}$ are swapped, so that $v_1(g_{m-1}) =  v_2(g_{m-2}) =  c^2$ and $v_2(g_{m-1}) = v_1(g_{m-2}) = c$;
the analysis is essentially identical.

In the case of chores the construction is very similar. For instance, when $m$ is even, we have 
% \begin{table}[h!]
	\begin{center}  
	\begin{tabular}{lcccccc|ccc}
		& $g_1$ & $g_2$ & $g_3$   &\ldots & $g_{m-4}$ & $g_{m-3}$  & $g_{m-2}$ & $g_{m-1}$ & $g_{m}$ \\[5pt]
		agent 1: \hspace{1em} & $-1$ & $-c$ & $-c$ &  \ldots & $-c$ & $-c$ & $-c$ & $-c^2$ & $-c^2$ \\
		agent 2: \hspace{1em} & $-1$ & $-c$ & $-c$ & \ldots & $-c$ & $-c$ &  $-c^2$ & $-c$ & $-c^2$ 
	\end{tabular}
	\end{center}
% \end{table}
% \begin{table}[h!]
Notice how the values of $g_{m-2}$ and $g_{m-1}$ differ from their counterparts for goods; it is not just the sign, as their (absolute) values have been switched. The whole argument is the same as before, modulo the differences discussed in the corresponding part of the proof of Theorem \ref{thm:k-value_impossibility_lookahead}.
\end{proof}

%%%%%%%%%%%%%%%%%%%%%%%%%%%%%%%%%%%%%%
\section{A Matching-Inspired Framework}
\label{sec:main}
%%%%%%%%%%%%%%%%%%%%%%%%%%%%%%%%%%%%%%%

In light of the negative results of the previous section, it is natural to explore stronger online algorithm models that not only know part of the future but can manipulate it too. Augmenting our algorithms with buffers that can store and reorder some of the items is clearly an approach in this direction. Recall that this interpolates between online (buffer of size $0$) and offline (buffer of size $m$) fair division. We  focus on the problem of identifying a buffer size that is both reasonably small and allows us to always build temporal-\efo allocations. Importantly, \textit{can this be independent of $m$, the total number of items?} We resolve this question positively via our algorithm Store-and-Match (Algorithm \ref{alg:online-buffer}) for both modes of allocation we considered here, SAS, where at most one item can be allocated per time step, and DAB, where any number of available items can be allocated at once.

\begin{theorem}
\label{thm:sas_positive}
    For any personalized $k$-value instance, Store-and-Match (Algorithm \ref{alg:online-buffer}) with a buffer of size $(n-1)k$ efficiently computes a temporal-\efo allocation in the SAS model. Moreover, this allocation is \ef every $n$ steps during the first $m-(n-1)k$ time steps in which an item is allocated.
\end{theorem}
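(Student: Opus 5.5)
We plan to design Store-and-Match around \emph{rounds} of $n$ consecutive allocation steps, each round giving exactly one item to every agent in a way that is envy-free \emph{within the round}. Concretely, an item $g$ is of \emph{type} $\ell$ for agent $i$ if $v_i(g)=\alpha_{i_\ell}$. Given the current pool $P$ (buffer plus arriving item), we build the bipartite graph between agents and items of $P$. We look for a \perfect $\mu$ (saturating all agents) that is \emph{envy-free}: $v_i(\mu(i))\ge v_i(\mu(j))$ for all $i,j$. If every round is envy-free in this sense, then by additivity the cumulative allocation at the end of each round is EF. This gives the ``EF every $n$ steps'' part.

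For the SAS model, the items of a round are handed out one per time step in a fixed agent order. At an intermediate step of a round, agent $i$ compares bundles that are EF-at-round-start plus at most one new item each. Suppose $i$ has not yet received its round item but $j$ has. Then $i$'s envy is at most $v_i(\mu(j))$ (if positive), and removing that good from $A_j$ restores EF. If $v_i(\mu(j))\le 0$, there is no envy. Suppose instead $i$ has received its item but $j$ has not. Then $v_i(A_i)$ changed by $v_i(\mu(i))$, and if this is negative, removing $\mu(i)$ from $A_i$ fixes the envy. In every case a single removal suffices, which is exactly EF1 in the mixed-manna sense. Fixing the agent order inside the round is harmless.

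The core combinatorial step is to show that whenever the pool has more than $(n-1)k$ items, an envy-free \perfect exists. Each agent $i$ partitions the pool into at most $k$ value classes. We will try to construct the matching greedily using Hall's Theorem on a restricted graph. Let agent $i$ point to the items in its top-valued nonempty class, but only among items not ``reserved'' in a way that would cause envy. The cleaner route is an inductive/pigeonhole argument. Process agents and, for each agent, discard its top class if that class is too small to serve it without conflict. Since each agent can discard at most $k-1$ classes, and only $n-1$ other agents compete, at most $(n-1)k$ items can be blocked. Any extra item therefore witnesses a Hall-violating set being impossible. Formally, we take a maximal matching in which every matched agent receives an item of its highest class among remaining items. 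If some agent set $S$ violates Hall's condition, we derive that all items outside a set of size $\le (n-1)k$ are absent, a contradiction. Envy-freeness then follows because each agent gets a top-class item relative to the items assigned to others.

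We expect this Hall-type lemma to be the main obstacle, and in particular ensuring envy-freeness and perfectness simultaneously. We would first try the iterative ``top trading'' style argument: repeatedly match a Hall-tight set of agents to their top-class items and remove them, then recurse with the remaining agents and items.

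Finally, we assemble the pieces. Store items until the buffer holds $(n-1)k$ items and a new item arrives. Whenever the pool has more than $(n-1)k$ items, compute the envy-free matching for the next round, and release one matched item per step, keeping the buffer within $(n-1)k$ (one arrival, one allocation). Before the buffer first fills, nothing is allocated, so the empty allocation is trivially EF1. At the end of the input, the remaining at most $(n-1)k$ items are allocated by any offline temporal-EF1 procedure, such as round-robin style EF1 for mixed manna \citep{aziz2022fair} extended temporally. This affects only the last $(n-1)k$ allocation steps, which matches the restriction to the first $m-(n-1)k$ steps. Efficiency follows because each matching is polynomial-time computable.
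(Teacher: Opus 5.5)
Your outer architecture is the same as the paper's. Each round is driven by a matching in which every agent gets an item that is top for her among the matched items, so the allocation is EF after every round. The ``one new item per agent'' argument gives EF1 at intermediate steps, and an offline temporal-EF1 procedure handles the at most $(n-1)k$ leftover items.

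\textbf{The gap.} Everything rests on the lemma that any $(n-1)k+1$ items of a personalized $k$-value instance admit an envy-free left-perfect matching, and that one can be found in polynomial time. You flag this as the main obstacle but do not prove it, and none of your sketches works as written.
\begin{itemize}
\item The count ``each agent discards at most $k-1$ classes, so at most $(n-1)k$ items are blocked'' bounds the number of discard events, not the number of discarded items. A value class can be arbitrarily large, and discarding it also removes items that may be top for other agents.
\item The ``top trading'' recursion commits items to a Hall-tight set of agents and then recurses on the rest. This can destroy envy-freeness. For example, agent~1's unique top item is $a$; agents~2 and~3 both rank $b$ first, and agent~2 ranks $a$ second. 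Committing $a$ to agent~1 and then resolving the conflict over $b$ leaves agent~2 with a lower-valued item, so she envies agent~1.
\item The maximal-matching version just asserts the conclusion (``we derive that all items outside a set of size $\le (n-1)k$ are absent'').
\end{itemize}

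\textbf{The missing idea.} This is how the paper proves Lemma~\ref{lem:perfect_matching_exists_for_n_agents}: keep all $n$ agents in play and only \emph{hide} items during the search, never commit them.
\begin{itemize}
\item Start from $R_1=S$. While the top choice graph $G(R_\ell)$ has no left-perfect matching, take a Hall violator $X\subseteq N$. Hide only the set $H_\ell$ of items in $\Gamma_\ell(X)$ adjacent to at least two agents of $X$.
\item Let $L_\ell\subseteq X$ be the agents whose entire top set was hidden. Every agent of $X\setminus L_\ell$ keeps a top item adjacent to no other agent of $X$. Hence $|\Gamma_\ell(X)\setminus H_\ell|\ge |X\setminus L_\ell|$, and combined with $|\Gamma_\ell(X)|<|X|$ this gives $|H_\ell|\le |L_\ell|-1$.
\item Each agent in $L_\ell$ drops to a strictly lower value level. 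This happens at most $k-1$ times per agent and at most once per iteration. If agent $i$ drops $d_i$ times, there are at least $\max_i d_i$ iterations.
\item So the total number of hidden items is at most $\sum_\ell(|L_\ell|-1)\le \sum_i d_i-\max_i d_i\le (n-1)(k-1)$. At least $n$ items always remain.
\item $R$ strictly shrinks but never falls below $n$ items, so the process stops, and it can only stop at a left-perfect matching of some $G(R_\ell)$.
\end{itemize}
That matching is envy-free because hidden items are never allocated, hence never compared against. The same procedure also gives the polynomial running time you only assert.

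\textbf{Smaller omission.} For the final phase you additionally need two facts that ``extended temporally'' leaves unstated. First, every prefix of the offline procedure must be EF1 (the paper's Lemma~\ref{lem:DRR_is_TEF1}). Second, adding an EF1 allocation on top of an EF one must keep it EF1.
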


Recall that a temporal-\ef/\efo allocation is \ef in every time step but the last one (where it is \efo).

\begin{theorem}
\label{thm:dab_positive}
    For any personalized $k$-value instance, Store-and-Match (Algorithm \ref{alg:online-buffer}) with a buffer of size $(n-1)k$ efficiently computes a temporal-\ef/\efo allocation in the DAB model.
\end{theorem}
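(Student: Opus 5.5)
The plan is to run Store-and-Match in \emph{rounds}. Each round allocates exactly one item to every agent simultaneously, using a matching between agents and buffered items that is ``envy-free'' in a strong sense. Starting from the empty allocation, which is \ef, suppose every round assigns agent $i$ an item $g_i$ with $v_i(g_i)\ge v_i(g_j)$ for all $j$. Then $v_i(A_i)-v_i(A_j)$ never decreases, so the allocation stays \ef after every round. Under DAB a round is a single batch, so the allocation is \ef at every time step at which something is allocated. It only remains to (i) show such a matching always exists whenever the buffer overflows, and (ii) handle the leftover items at the end, where a single final step must produce an \efo allocation.

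For (i), consider the pool $P$ formed by the full buffer plus the new item, so $|P|=(n-1)k+1$. Give each agent $i$ a ``type'' partition of $P$ into at most $k$ value classes under $v_i$. Build the bipartite graph in which agent $i$ is adjacent to the items of its top nonempty class in $P$. A perfect matching from agents into $P$ in this graph is exactly an envy-free round. Matching each agent to one of its favourite items makes it prefer its own item to any other item in $P$, in particular to those given to others.

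A perfect matching need not exist, since two agents may share a unique favourite item. The plan is therefore to refine the graph with Hall's Theorem. Take a Hall violator $S$, a set of agents whose neighbourhood satisfies $|N(S)|<|S|$. Allocate only to a suitable subset, or find an envy-free matching in the sense of matching a set $L$ of agents such that no unmatched agent values a matched item above its own fixed situation. The cleanest version is this: take a maximal set of agents $L$ that can be matched to items such that every agent in $L$ gets a top item, and every agent outside $L$ values every matched item no more than some unmatched item. The counting that makes $(n-1)k+1$ items suffice goes as follows. If some agent cannot be served, iteratively remove its top class from consideration and descend to its next class. Each agent can ``lose'' at most $k-1$ classes, and each lost class is blocked by fewer than $n$ items. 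I expect this pigeonhole argument to show that with $(n-1)k+1$ items some nonempty envy-free matching exists (possibly on fewer than $n$ agents, with partial rounds still preserving \ef, since unmatched agents do not envy matched items relative to each other). This combinatorial step is the main obstacle, and I would first try to prove it by induction on $n$ via Hall's deficiency version. A partial envy-free matching frees at least one buffer slot, which is all DAB needs to continue.

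For (ii), when the stream ends at most $(n-1)k$ items remain. Allocate them in one final batch via an offline \efo procedure for mixed manna that starts from an \ef partial allocation. The plan is to use the double round-robin style algorithm of \citet{aziz2022fair}, which keeps \efo when appended to an envy-free prefix. The existing allocation is \ef, so the only envy is created inside the last batch, where removing one item fixes it. Allocations are nested and the allocation is \ef at all earlier steps, so the resulting sequence is temporal-\ef/\efo. Efficiency follows since each step needs only bipartite matching computations and a round-robin pass.
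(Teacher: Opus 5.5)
\textbf{Verdict.} There is a genuine gap in step (i). Your outer structure is the paper's: a round in which every agent receives one of its top items preserves \ef, and in DAB each round is a single batch. The leftover buffer is handed to Double Round-Robin in one final batch, which stays \efo on top of an \ef prefix. But step (i) carries all the content of the theorem, and it fails in two ways.

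\textbf{Partial rounds do not preserve \ef.} You fall back on a possibly partial matching and claim partial rounds preserve \ef; they do not. If agent $i$ is unmatched and agent $j$ receives $g_j$, then $v_i(A_i)-v_i(A_j)$ decreases by $v_i(g_j)$. Your condition (every agent outside $L$ values every matched item no more than some unmatched item) does not force $v_i(g_j)\le 0$. Already at the first allocation from the empty allocation, a round that skips agent $i$ while giving $j$ a good with $v_i(g_j)>0$ creates envy. So you need a matching covering all $n$ agents; a nonempty partial one is not enough.

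\textbf{The counting does not close.} The argument that $(n-1)k+1$ items suffice is only anticipated, and the version you sketch fails. Up to $n(k-1)$ lost classes, each blocked by up to $n-1$ items, only bounds the discarded items by about $n(n-1)(k-1)$, far more than the buffer holds. It is also unclear whether you discard items per agent. That would be unsafe: another agent could then be matched to an item that $i$ values above its own.

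\textbf{The missing ingredient} is Lemma~\ref{lem:perfect_matching_exists_for_n_agents}.
\begin{itemize}
\item Items are hidden globally, and you look for a \perfect in $G(R)$ for some $R\subseteq S$. This still preserves \ef, since $\mu(i)\in T_i(R)$ and $\mu(j)\in R$; hidden items are simply not allocated in that round.
\item When $G(R_\ell)$ has no \perfect, take a Hall violator $X$. Hide only the set $H_\ell$ of items of $\Gamma_\ell(X)$ adjacent to at least two agents of $X$. Let $L_\ell\subseteq X$ be the agents whose entire top set lies in $H_\ell$.
\item Each agent of $X\setminus L_\ell$ keeps a top item adjacent to no other agent of $X$, so $|\Gamma_\ell(X)\setminus H_\ell|\ge|X\setminus L_\ell|$. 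Combined with $|\Gamma_\ell(X)|<|X|$, this gives the per-round bound $|H_\ell|\le|L_\ell|-1$.
\item An agent's top value strictly drops each time it lies in some $L_\ell$. This happens at most $k-1$ times, and at most once per round.
\item Summing over the $x-1$ failed rounds bounds the hidden items by $n\min(k-1,x-1)-(x-1)\le(n-1)(k-1)$. So at least $n$ items always survive.
\item Each failure hides at least one item, so the process must end with a \perfect.
\end{itemize}
Your proposal lacks this per-round trade-off between items hidden and agents dropping a value level. Without it, the $(n-1)k$ buffer bound does not follow.
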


% % % % % % % % % % % % % % % % % %
\subsection{The Illustrative Case of Two Agents} \label{sub:positive_2_agents}
% % % % % % % % % % % % % % % % % %

To illustrate the high-level idea behind our main technical result, which leads to Theorems \ref{thm:sas_positive} and \ref{thm:dab_positive}, we first present it for the easier case of two agents. 
Intuitively,  Algorithm \ref{alg:online-buffer} works as follows: Suppose we have already constructed a partial EF allocation using a subset of $M$. Once the buffer fills up,\footnote{This requirement is for presentation purposes and is not crucial. The algorithm could have been designed to allocate before the buffer is full if suitable pairs are found, without violating temporal-EF1 or EF every two steps. Similarly for $n\ge3$.} we construct a bipartite graph between the agents and all currently available items (i.e., the items in the buffer $B$ and the incoming item). We refer to these items as \emph{live} items, and there is an edge in the graph connecting each agent \textit{to her most valuable items}. Then, if there is a perfect matching between the two agents and two of the live items, these items can be allocated to the agents, and the allocation remains \ef. Otherwise, there is a contested item, which we temporarily hide and repeat the same process. We eventually show that a buffer of size $k$ suffices to always have a compatible pair of items to allocate in an envy-free manner. Hence, by starting with an EF partial allocation, we can maintain EF every two time steps, while in the intermediate step the allocation is EF1.

To formally analyze the algorithm we will introduce some tools. Up to Lemma \ref{lem:perfect_matching_exists}, we state and prove everything for the general case, as there is no particular difference between $n=2$ and $n\ge 3$. 

\begin{definition}[\perfect]
Let $G=(L,R,E)$ be a bipartite graph and let $\mu \subseteq E$ be a matching in $G$. 
We say that $\mu$ is \emph{left-perfect} if every vertex $v \in L$ is incident to exactly one edge in $\mu$.
\end{definition}

We will use the graph-theoretic version of Hall's Theorem~\citep{Hall35}. 

\begin{theorem}[\cite{Hall35}]\label{thm:Hall}
Let $G = (L, R, E)$ be a bipartite graph. For any subset $S \subseteq L$, define the \emph{neighborhood} of $S$, 
% \[
$\Gamma(S) = \{\, r \in R \mid \exists\, \ell \in S \text{ such that } (\ell,r) \in E \,\}$.
% \]
Then there exists a \perfect  if and only if
%\[
$|\Gamma(S)| \ge |S| ~~ \text{for every } S \subseteq L.$
%\]
\end{theorem}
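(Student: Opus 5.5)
The statement just above is Hall's Theorem, which the paper cites rather than proves. If I were to prove it, I would use the standard induction on $|L|$.

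\emph{Necessity} is immediate. If $\mu$ is a \perfect, then for any $S\subseteq L$ the partners $\mu(S)$ are $|S|$ distinct vertices of $\Gamma(S)$, so $|\Gamma(S)|\ge |S|$.

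For \emph{sufficiency}, assume Hall's condition holds and induct on $|L|$; the case $|L|\le 1$ is trivial. I would split into two cases according to whether the condition has slack.

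In the first case, every nonempty proper subset $S\subsetneq L$ satisfies $|\Gamma(S)|\ge |S|+1$. Pick any $\ell\in L$ and any neighbor $r$ of $\ell$, which exists since $|\Gamma(\{\ell\})|\ge 1$. Delete both $\ell$ and $r$. In the remaining graph, every nonempty $S\subseteq L\setminus\{\ell\}$ has neighborhood of size at least $|S|+1-1=|S|$. By induction that graph has a \perfect, and adding the edge $(\ell,r)$ extends it to one for $G$.

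In the second case, some nonempty proper subset $S_0\subsetneq L$ is \emph{tight}, meaning $|\Gamma(S_0)|=|S_0|$. Hall's condition holds in the induced subgraph on $S_0\cup\Gamma(S_0)$, since neighborhoods of subsets of $S_0$ lie inside $\Gamma(S_0)$. Induction therefore gives a matching that saturates $S_0$ using exactly the vertices of $\Gamma(S_0)$. Now consider the graph $G'$ on $(L\setminus S_0)\cup(R\setminus\Gamma(S_0))$. For any $T\subseteq L\setminus S_0$, Hall's condition applied to $T\cup S_0$ in $G$ gives
\[
|\Gamma_{G'}(T)| = |\Gamma(T\cup S_0)|-|\Gamma(S_0)| \ge |T|+|S_0|-|S_0| = |T|.
\]
So by induction $G'$ has a \perfect, and the union of the two matchings is a \perfect\ for $G$.

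The only delicate step is the neighborhood-counting identity in the tight case. It holds because $\Gamma(T\cup S_0)=\Gamma_{G'}(T)\,\dot\cup\,\Gamma(S_0)$. Everything else is routine, so in the paper one would simply cite \cite{Hall35} as is done here.
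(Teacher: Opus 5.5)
Your proof is correct. It is the standard Halmos--Vaughan induction on $|L|$. The case split (slack for every nonempty proper subset, versus some nonempty proper tight set $S_0$) is exhaustive under Hall's condition. The identity $\Gamma(T\cup S_0)=\Gamma_{G'}(T)\,\dot\cup\,\Gamma(S_0)$ is exactly what lets the residual graph inherit the condition. The paper gives no proof of its own and simply cites \cite{Hall35}, so there is no competing argument to compare against.

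One remark on how your argument fits the paper's use of the theorem. In the proof of Lemma~\ref{lem:perfect_matching_exists_for_n_agents} and in line~\ref{line:Hall} of Algorithm~\ref{alg:online-buffer}, what is actually needed is the contrapositive of sufficiency in an \emph{effective} form: a set $X$ with $|\Gamma(X)|<|X|$ must be found in polynomial time. Your induction is existential on this point, since locating a tight $S_0$ by inspection is exponential. The alternating-path proof produces the violator directly. Take a maximum matching, and if some $u\in L$ is unmatched, let $X$ be the left vertices reachable from $u$ by alternating paths. Then $\Gamma(X)$ consists exactly of the partners of $X\setminus\{u\}$, so $|\Gamma(X)|=|X|-1$. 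This is the kind of reachability computation the paper appeals to. Both arguments prove the stated equivalence; the alternating-path version also justifies the paper's algorithmic claim.
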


Although in our model we have defined cardinal preferences for the agents, it suffices to use only the weaker form of ordinal preferences induced by the cardinal form. Let $\succeq_i$  be the weak ordering over the items in $M$ induced by $v_i$, where for all $g, g' \in M$, $g \succeq_i g'$ if and only if $v_i(g) \ge v_i(g')$. 

For any $S \subseteq M$ and any $i \in N$, we define the \textit{top set}, $T_i(S)$, as the set of the most preferred items in $S$ according to $\succeq_i$: 
% \[
$T_i(S) = \{ g \in S : g \succeq_i g' \text{ for all } g' \in S \}$.
% \]
Note that there can be multiple such items, and also that $T_i(S)\neq \emptyset$ for any $S\neq \emptyset$.

The following definition provides a very useful graph structure we will use throughout our proof.
\begin{definition}
Given the set of agents $N$ and a set of items $S$, we denote by $G(S)$ the bipartite graph between $N$ and $S$ where an edge $(i,g)$ exists if and only if $i \in N$, $g \in S$ and $g \in T_i(S)$. We refer to $G(S)$ as the \emph{top choice graph} with respect to $S$. 
\end{definition}
Note that in $G(S)$ every $i\in N$ has degree at least one, since $T_i(S)\neq \emptyset$. As a first step, the following simple lemma says that a \perfect in the above graph is enough to expand an EF partial allocation, maintaining envy-freeness. 

\begin{lemma}\label{lem:A_perfect_ matching_is_enough_for_EF}
    Let $\mathcal{A}=(A_1,\dots,A_n)$ be an EF partial allocation over a set $N$ of agents, $S$ be a set of items, and let $G(S)$ be the induced top choice graph.  If there exists a \perfect $\mu:N\rightarrow S$ in $G(S)$, 
    then the (partial) allocation $\mathcal{A}'=(A'_1,\dots,A'_n)$ such that $A_i' = A_i \cup \{\mu(i)\}$ is EF.
\end{lemma}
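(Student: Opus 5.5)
The plan is a direct verification from the definitions, using additivity and the fact that a \perfect assigns to each agent one of her own top items in $S$. Fix any pair of agents $i,j\in N$; we must show $v_i(A_i')\ge v_i(A_j')$. By additivity,
\[
v_i(A_i') = v_i(A_i) + v_i(\mu(i)), \qquad v_i(A_j') = v_i(A_j) + v_i(\mu(j)).
\]
We need $\mu(i)\neq\mu(j)$ and $\mu(i)\notin A_i$, $\mu(j)\notin A_j$ so that these unions are disjoint and additivity applies as written. Distinctness follows because $\mu$ is a matching. Disjointness from the existing bundles follows since $S$ consists of items not yet allocated, which is the implicit setting (live items); I would state this assumption explicitly.

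The first summand is handled by the hypothesis: since $\mathcal{A}$ is EF, $v_i(A_i)\ge v_i(A_j)$. For the second summand, since $\mu$ is a \perfect in $G(S)$, the edge $(i,\mu(i))$ lies in $G(S)$, so $\mu(i)\in T_i(S)$. By the definition of the top set, $\mu(i)\succeq_i g'$ for all $g'\in S$. In particular, because $\mu(j)\in S$, we get $v_i(\mu(i))\ge v_i(\mu(j))$. Adding the two inequalities gives $v_i(A_i')\ge v_i(A_j')$. Since $i,j$ were arbitrary, $\mathcal{A}'$ is EF.

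There is no real obstacle. The only point requiring care is the bookkeeping that the matched items are distinct and fresh, so the sums decompose additively. Note also that the argument uses nothing about signs, so it holds for mixed manna as well.
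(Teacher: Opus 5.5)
Your proof is correct and follows the paper's own argument: EF of $\mathcal{A}$ gives $v_i(A_i)\ge v_i(A_j)$, the edge $(i,\mu(i))$ in $G(S)$ gives $v_i(\mu(i))\ge v_i(\mu(j))$, and additivity lets you add the two inequalities. You also state explicitly that the items of $S$ must be unallocated so that the unions are disjoint, an assumption the paper leaves implicit.
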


\begin{proof}
    By assumption, $v_i(A_i)\geq v_i(A_j)$ for all $i,j \in N$. Due to the \perfect, $v_i(\mu(i))\geq \allowbreak v_i(g)$ for all $g \in S$. Hence, for any $i,j \in N$:
    % \begin{align*}
        $v_i(A_i \cup \{\mu(i)\}) = v_i(A_i)  + v_i(\mu(i)) \geq v_i(A_j) + v_i(\mu(j)) = v_i(A_j \cup \{\mu(j)\})$
    % \end{align*}
    and the lemma follows.
\end{proof}

\begin{algorithm}[t]
	\setstretch{1.1}
	\caption{Store-and-Match$(v_1, \ldots, v_n; M; b)$ \\{\small {($M$ and $v_i$, $i\in [n]$, are revealed in an online fashion, one item at a time; $b$ is the capacity of the buffer)}}}
	\label{alg:online-buffer}
	\begin{algorithmic}[1]
		% \Require Stream of items; agents $N=\{1,\ldots,n\}$; buffer of size $b$
		\State $B \gets \emptyset$ \Comment{initialization of the buffer}
		\State $Q \gets \emptyset$ \Comment{set of scheduled allocations (for the SAS model)}
		
		\For{each arriving item $g$}
		
		\If{$Q \neq \emptyset$}
		\State Allocate one item from $Q$ to its agent and delete it from $B$ \label{line:Q_non-empty} \Comment{choose lexicographically}
		% \State Remove the allocated item from $B$
		\EndIf    \vspace{2pt}
		\If{$|B|< b$} \Comment{if the buffer is not full, we store the item}
		\State $B \gets B \cup \{g\}$
		\Else  \Comment{the buffer is full}
		\State Let $R \gets B \cup \{g\}$ \label{line:k+1}
		\State Construct the top choice graph $G(R)$, where $(i,h) \in E$ if and only if $h \in T_i(R)$
		\While{$G(R)$ does not admit a \perfect  $\mu$} \label{line:while}
		\State Let $H\subseteq R$ be the items in a maximal set violating the condition of Hall's theorem \label{line:Hall}
		\State $R \gets R\setminus H$ 
		\EndWhile
		\State Add the agent-item pairs matched by $\mu$ to $Q$ \Comment{$\mu$ is guaranteed to exist at this point}
		\State For the SAS model: Allocate one item from $Q$ to its agent and delete it from $B$ \label{line:allocate_matching} 
		\State For the DAB model: Allocate all items from $Q$ and delete them from $B$ \label{line:allocate_matching_DAB} 
		\If{$g$ has not been allocated} 
		\State $B \gets B \cup \{g\}$ \Comment{store $g$, if needed}
		\EndIf
		\EndIf
		\EndFor\vspace{2pt}
		
		\State \textbf{Finalization:} \Comment{the stream ends but $
			|B|>0$}
		\While{$Q \neq \emptyset$}
		\State Allocate one item from $Q$ to its agent and delete it from $B$ \label{line:last_matching}
		% \State Remove the allocated item from $B$
		\EndWhile
		\State Allocate the remaining items in $B$ via the Double Round-Robin algorithm of~\citet{aziz2022fair} \label{line:DRR}
	\end{algorithmic}
\end{algorithm}

The following is the key technical lemma behind our positive results, and also forms the main difference in the analyses of the case of $2$ agents and the general case of $n\ge 3$. It shows that, assuming a personalized $k$-value instance,  among $k+1$ available items  we can always find a pair to expand the allocation and keep it EF. Note that this is not the simplest way to show Lemma \ref{lem:perfect_matching_exists} for $n=2$ but has the advantage of providing a clean picture of the main idea that is used for general $n$.

\begin{lemma}\label{lem:perfect_matching_exists}
Assume we have a personalized $k$-value instance with two agents. Let $(A_1,A_2)$ be an EF-partial allocation and $S$ be any set of $k+1$ unallocated items. Then there exists at least one pair of items  $\{a,b\} \subseteq S$ such that $(A_1\cup \{a\}, A_2 \cup \{b\})$ is an EF-partial allocation.
\end{lemma}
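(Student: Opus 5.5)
Via Lemma~\ref{lem:A_perfect_ matching_is_enough_for_EF}, it suffices to find a subset $R \subseteq S$ whose top choice graph $G(R)$ has a \perfect $\mu$; setting $a=\mu(1)$ and $b=\mu(2)$ then keeps the allocation EF. I will mirror the ``peeling'' loop of Algorithm~\ref{alg:online-buffer}. Start with $R_0=S$. If $G(R_0)$ admits a \perfect we are done. Otherwise, Hall's Theorem (Theorem~\ref{thm:Hall}) fails for some $X\subseteq\{1,2\}$. Since every agent has degree at least one, the only possible violation is $X=\{1,2\}$ with $|\Gamma(\{1,2\})|=1$. So $T_1(R_0)\cup T_2(R_0)=\{h\}$ is a single contested item, and $T_1(R_0)=T_2(R_0)=\{h\}$. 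Remove it, $R_1=R_0\setminus\{h\}$, and repeat.

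\textbf{Key counting step.} Each time we remove the unique common top item $h$ of $R_t$, it is the \emph{unique} maximizer for agent~1 in $R_t$. Hence every item remaining in $R_{t+1}$ has $v_1$-value strictly smaller than $v_1(h)$. The removed items $h_0,h_1,\dots$ therefore have strictly decreasing $v_1$-values. The same holds for agent~2, but one agent suffices. Since agent~1 uses at most $k$ distinct values, at most $k$ items can be removed in total before we run out of distinct values. More precisely, after removing $t$ items, every element of $R_t$ has a value strictly below the $t$ distinct values already used.

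\textbf{Termination with $|R|\ge 1$.} Suppose the process removes $k$ items. Then the remaining item set $R_k$ is nonempty, since $|S|=k+1$, and its items must have $v_1$-value strictly below $k$ distinct values that agent~1 already uses. This is impossible, because agent~1 has only $k$ values available. Hence the process stops at some $t<k$ with $|R_t|\ge 2$. The peeling itself guarantees $|R_t|\ge 2$ whenever it is still running, because a failure at $|R_t|=1$ would again require removing a strictly smaller-valued item. At the stopping point Hall's condition holds, so $G(R_t)$ has a \perfect with two distinct items $a\neq b$, and Lemma~\ref{lem:A_perfect_ matching_is_enough_for_EF} finishes the proof.

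\textbf{Main obstacle.} The delicate point is the off-by-one bookkeeping that ensures the final $R_t$ still has two items. I would handle it with the invariant ``after $t$ removals, the items of $R_t$ take at most $k-t$ distinct $v_1$-values, and $|R_t|=k+1-t$.'' Consider a violation at $|R_t|=1$, which forces $k+1-t=1$. It would mean the single item is the unique top, but no removal happens at size one since both agents point to it. I would argue instead that a violation at size $2$ or more leaves $R_{t+1}$ nonempty with fewer than $k-t$ values. Combined with $|R_t|\ge 1$, this forces the stopping size to be at least $2$, because size $1$ with $t=k$ would need $k$ removed distinct values plus one more.
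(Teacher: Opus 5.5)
Your proof is correct and is essentially the paper's argument: for two agents every Hall violation is a single item that is the unique top choice of both agents, the removed items have strictly decreasing values, and with $k+1$ items and at most $k$ values a left-perfect matching must appear while at least two items remain. The paper phrases the endgame as the last two items tying for both agents, whereas you rule out a hypothetical $k$-th removal using agent~1's values alone; your remarks about size one in the final paragraph are muddled (the peeling would in fact remove a lone item), but they are unnecessary, since that contradiction already shows the process stops with $|R_t|\ge 2$.
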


\begin{proof}
    We prove the lemma via a bounded progress argument. We construct a sequence of top choice graphs 
    $G(R_\ell)$, where
    the sets $R_\ell$ form a decreasing sequence, that is, $R_{\ell+1} \subsetneq R_\ell$, whereas $R_1 = S$. At each step, we examine the graph $G( R_\ell)$. If it admits a \perfect, we have found an EF partial allocation due to Lemma~\ref{lem:A_perfect_ matching_is_enough_for_EF} and we may stop. Otherwise, we remove from $R_{\ell}$ the contested vertices (items)  
    and proceed to the next graph in the sequence. We will show that this process can continue for up to $k$ steps and that for some $j \leq k$, the graph $G(R_j)$ admits a \perfect.

    Let us start with the first iteration of this process and the top choice graph $G(R_1)$, where $R_1=S$. If $G(R_1)$ has a left-perfect matching, we are done by Lemma~\ref{lem:A_perfect_ matching_is_enough_for_EF} and terminate. Suppose now that $G(R_1)$ does not have a left-perfect matching. Then, due to the fact that $|N|=2$ (and that all agents in $N$ have at least one neighbour), any set violating the condition of Hall's Theorem (Theorem \ref{thm:Hall}) has a simple form: both agents in $N$ are connected to the same item, which is the most preferred for both agents in the set $R_1$ and they are not connected to any other items (otherwise a perfect matching would exist). Let $h$ be this contested item. Define $R_2 = R_1\setminus \{h\}$ and proceed to the next iteration, where the top choice graph is $G(R_2)$.
       
    By repeating the above reasoning, whenever we reach iteration $\ell>1$, either a left-perfect matching exists in $G(R_\ell)$, in which case we are done, or both agents are connected only to one item, which is their most preferred in the set $R_\ell$.  
Also, in order to reach iteration $\ell$, it means that for all $\lambda<\ell$ the graph $G(R_\lambda)$ does not admit a left-perfect matching and we have deleted the $\ell-1$ most preferred items of both agents from the initial set $S$.
In particular, for any $\ell_1 < \ell_2 < \ell$, the $\ell_1$-th item we deleted was strictly preferred over the $\ell_2$-th one for both agents.

Since $S$ has exactly $k+1$ items and the agents have $k$-value valuations, there are at most $k$ possible different ranking positions for the items. Thus, if the process reaches $G(R_k)$ (having deleted $k-1$ items), then the two remaining items have the $k$-th highest value for both agents. But this means that a \perfect exists in $G(R_k)$. Therefore, when $|S|=k+1$, we are always able to find a left-perfect matching after at most $k$ iterations,  and this concludes the proof. 
\end{proof}

We are now ready to prove Theorems \ref{thm:sas_positive} and \ref{thm:dab_positive} for the case of two agents. 
Nevertheless, for brevity, we present only the proof of Theorem~\ref{thm:dab_positive} here. Both theorems are proved for general $n$ in the next section

\begin{proof}[Proof of Theorem \ref{thm:dab_positive} for $n=2$] 
Consider the times during which the algorithm allocated some item(s) to some agent(s). This can occur in  lines \ref{line:allocate_matching_DAB} and \ref{line:DRR} of Algorithm~\ref{alg:online-buffer}. In particular, for as long as new items keep arriving, Algorithm~\ref{alg:online-buffer} allocates items in line \ref{line:allocate_matching_DAB} based on the left-perfect matchings it computes. As each of these matchings is allocated at once, the allocation remains \ef in every step until items stop arriving. Once this happens, an \efo partial allocation is computed offline via Double Round-Robin (by Theorem \ref{thm:aziz_DDR_is_EF1}) and is added in a single time step to the existing \ef partial allocation, resulting in an \efo final allocation. Overall, the allocation computed by Algorithm~\ref{alg:online-buffer} is temporal-\ef/\efo. Moreover, every step of the computation  is  clearly done in polynomial time.
\end{proof}

% % % % % % % % % % % % % % % % % %
\subsection{Any Number of Agents} \label{sub:positive_general_n}
% % % % % % % % % % % % % % % % % %

In this section we analyze Algorithm~\ref{alg:online-buffer} for an arbitrary number of agents. The analysis follows Section \ref{sub:positive_2_agents} at a high level, however, we need the generalized analog of Lemma~\ref{lem:perfect_matching_exists}.

\begin{lemma}\label{lem:perfect_matching_exists_for_n_agents}
    Let $S$ be a set of $(n-1)k+1$ items. For a set $N$ of $n$ agents with personalized $k$-value valuations with an EF-partial allocation $(A_1,\dots,A_n)$, there exists a matching $\mu:N\rightarrow S$ such that the allocation $A'_i=A_i\cup \{\mu(i)\}$ for all $i \in N$ is an EF-partial allocation.
\end{lemma}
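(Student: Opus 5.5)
The plan is to mirror the proof of Lemma~\ref{lem:perfect_matching_exists}. We run the removal process of Algorithm~\ref{alg:online-buffer} on $R_1=S$ and show that it stops at a nonempty $R_j$ whose top choice graph $G(R_j)$ admits a \perfect $\mu$. Lemma~\ref{lem:A_perfect_ matching_is_enough_for_EF} then gives the claim, since every item in $R_j$ is unallocated. Concretely, as long as $G(R_\ell)$ has no \perfect, Theorem~\ref{thm:Hall} provides a set of agents $X_\ell\subseteq N$ with $|\Gamma(X_\ell)|<|X_\ell|$. We take $X_\ell$ maximal and set $H_\ell=\Gamma(X_\ell)$ and $R_{\ell+1}=R_\ell\setminus H_\ell$. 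The sets strictly decrease, because every agent has a neighbour and so $H_\ell\neq\emptyset$.

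To measure progress, attach to each agent $i$ and each $\ell$ a \emph{level} $\lambda_i(\ell)\in[k]$: the index $r$ such that $T_i(R_\ell)$ consists of items of value $\alpha_{i_r}$. We then establish three facts.
\begin{enumerate}
\item[(i)] Levels never decrease, since removing items can only lower the best value still available.
\item[(ii)] Every agent $i\in X_\ell$ has all of $T_i(R_\ell)$ inside $H_\ell$. Hence $\lambda_i(\ell+1)\ge\lambda_i(\ell)+1$, i.e., at least one level is lost.
\item[(iii)] If some $i\in X_\ell$ is at level $k$, then $T_i(R_\ell)=R_\ell$. In that case $|R_\ell|\le|H_\ell|<|X_\ell|\le n$. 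So, as long as $|R_\ell|\ge n$, agents in violating sets are below level $k$, and each agent can appear in at most $k-1$ violating sets before reaching level $k$.
\end{enumerate}

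The counting then goes as follows. At step $\ell$ we remove $|H_\ell|\le |X_\ell|-1$ items, and we charge them to the level drops of all agents of $X_\ell$ but one. Summing gives
\begin{equation*}
|S\setminus R_j| \le \sum_{\ell<j}\bigl(|X_\ell|-1\bigr)\,.
\end{equation*}
We aim to bound the right-hand side by $(n-1)k+1-n=(n-1)(k-1)$. That bound keeps $|R_j|\ge n$, so that (iii) forces termination in a configuration where Hall's condition holds. The natural tool is to designate one agent that is never charged. The first candidate is an agent of maximal level, or an agent outside all violating sets. For the latter, the maximality of $X_\ell$ should help: an agent whose top set lies inside $H_\ell$ would have been added to $X_\ell$.

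The main obstacle is exactly this sharpened charging. The naive bound $n(k-1)$ on the total number of drops falls short of $(n-1)(k-1)$ by $k-1$. Closing the gap requires showing that the uncharged agent can be chosen consistently across all steps, so that the charged drops involve only $n-1$ agents, each dropping at most $k-1$ times. If that fails, my fallback is an amortized argument. It would use the fact that when $X_\ell$ is maximal, the agents outside $X_\ell$ have at least one top item outside $H_\ell$. Their top sets in $R_{\ell+1}$ are therefore unchanged, and one can then track the quantity $\sum_i(k-\lambda_i)$ minus the number of steps as a potential.
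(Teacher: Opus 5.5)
Your framework is sound, and deleting all of $\Gamma(X_\ell)$ is a little cleaner than what the paper does. It gives at once that every agent of $X_\ell$ loses its entire top set and that $|H_\ell|\le |X_\ell|-1$. The paper instead deletes only the $X$-contested items and has to prove $|H_\ell|<|L_\ell|$ via the uniquely demanded ones. (Maximality of $X_\ell$ is never used in your count.)

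The gap is that you stop exactly at the decisive inequality $\sum_{\ell<j}(|X_\ell|-1)\le (n-1)(k-1)$ and leave it open. None of your candidate routes closes it as stated:
\begin{itemize}
\item An agent outside all violating sets need not exist; for $n=2$ every violating set is $N$.
\item An agent of maximal level need not be the same agent at every step.
\item The potential ``$\sum_i(k-\lambda_i)$ minus the number of steps'' only yields $|S\setminus R_j|\le n(k-1)-(j-1)$. This certifies $|R_j|\ge n$ only once $j\ge k$, and you have no lower bound on the number of failing steps. A single step can push an agent down several levels, e.g.\ $n=3$ with top sets $\{a\},\{a,b\},\{b\}$ and $b$ the unique second-best item of agent $1$.
\end{itemize}

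The missing observation is that the uncharged agent needs no special property: fix an arbitrary agent $i_0$ in advance.
\begin{itemize}
\item At every failing step, $|X_\ell\setminus\{i_0\}|\ge |X_\ell|-1\ge |H_\ell|$, whether or not $i_0\in X_\ell$. So the removed items can be charged injectively to agents of $X_\ell\setminus\{i_0\}$.
\item An agent $i\neq i_0$ is charged only at steps $s$ with $i\in X_s$. By your (i)--(iii) there are at most $k-1$ such steps as long as $|R_s|\ge n$: at each of them the level of $i$ is at most $k-1$, and it strictly increases from one such step to the next.
\item Hence $|S\setminus R_{\ell+1}|\le (n-1)(k-1)$, i.e., $|R_{\ell+1}|\ge n$.
\item An induction on $\ell$, applying (iii) only at steps where $|R_s|\ge n$ is already established, shows that $R$ never falls below $n$ items. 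Since $R$ strictly shrinks, a left-perfect matching must eventually exist.
\end{itemize}

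Equivalently, $|X_\ell|\le n$ gives $|X_\ell|-1\le \frac{n-1}{n}|X_\ell|$, and summing over steps bounds the removed items by $\frac{n-1}{n}\cdot n(k-1)=(n-1)(k-1)$. This per-step cap is exactly what your potential argument lacks. The same cap is also what makes the paper's closing count airtight. That count subtracts one per step and invokes $x\ge\xi$, and $x\ge\xi$ can fail for the same multi-level-drop reason; in the example above with all other items of value~$1$ for everyone, $x=2<3=\xi$.
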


\begin{proof}%[\textbf{Proof of Lemma \ref{lem:perfect_matching_exists_for_n_agents}}]
    We will prove the lemma with a process similar to Lemma~\ref{lem:perfect_matching_exists}. The arguments follow the same structure but now we need to work  more carefully with the sets that violate the condition of Hall's Theorem (i.e., subsets $X \subseteq N$ for which $|\Gamma(X)| < |X|$).

    Again, we construct a sequence of top choice graphs $G(R_\ell)$ where the sets $R_\ell$ form a decreasing sequence, that is, $R_{\ell+1} \subsetneq R_\ell$, whereas $R_1 = S$. At each step, we examine the graph $G(R_\ell)$. If it admits a \perfect we have found an EF partial allocation due to Lemma~\ref{lem:A_perfect_ matching_is_enough_for_EF} and we terminate. Otherwise, we remove from $R_{\ell}$ some highly contested vertices / items (see below)  and proceed to the next graph in the sequence. We will show that this process can continue for up to $(n-1)k$ steps and that for some $\ell \leq (n-1)k$ the graph $G(R_\ell)$ admits a \perfect.

    It remains to show that such a \perfect must eventually be found. Suppose that, for some $\ell$, the graph $G(R_i)$ does not admit a \perfect, for all $i\in [\ell]$. Specifically for $G(R_{\ell})$, by Hall's Theorem (Theorem~\ref{thm:Hall}), there exists a set $X\subseteq N$ such that
$|\Gamma_\ell(X)|<|X|$, 
where $\Gamma_\ell(X)$ is the set of neighbors of $X$ in $G(R_\ell)$, as defined in the statement of Theorem~\ref{thm:Hall}. That is, $\Gamma_\ell(X) \subseteq R_\ell$. 
Note that such a set $X$ can be found in polynomial time, as it reduces to a reachability problem \citep{AMNS17}.  So, assume that we have such an $X$.

    At this point, we partition the vertices in $\Gamma_\ell(X)$ into two types. An item is \textit{uniquely demanded} if it is adjacent to exactly one agent of $X$, and \textit{$X\!$-contested} if it is adjacent to at least two agents of $X$. 
    
    Let $H_\ell$ be the set of $X\!$-contested items in $\Gamma_\ell(X)$. Observe that the set $H_\ell$ is non-empty. Indeed, every agent in $X$ is adjacent to at
least one item in $\Gamma_\ell(X)$. If every item in $\Gamma_\ell(X)$
were uniquely demanded, then the uniquely demanded items would be enough to give a distinct item
to every agent in $X$, contradicting the fact that $|\Gamma_\ell(X)|<|X|$. Hence, there exists at least one $X\!$-contested item.
    
    We hide from $R_\ell$ the set of $X\!$-contested items, 
    thus creating the set $R_{\ell+1} = R_\ell\setminus   H_\ell$. Note that, in every iteration that fails to find a \perfect the set of remaining items strictly shrinks.

    We now identify the agents whose current set of most preferred items (recall that we refer to these as the agent's \emph{top set}) disappears after hiding
$H_\ell$. Let
\[
    L_\ell=\{i\in X:T_i(R_\ell)\cap R_{\ell+1}=\emptyset\}\,.
\]
These are exactly the agents in $X$ for whom all items in their current top
set are hidden in this step. Hence, after passing from $R_\ell$ to
$R_{\ell+1}$, every agent $i$ in $L_\ell$ has a strictly lower current best
remaining value than what they had in $R_{\ell}$. 
We claim that
 \[ |H_\ell|<|L_\ell|\,.\]  
Indeed, consider an agent $i\in X\setminus L_\ell$ whose top set survives. By the definition of
$L_\ell$, there exists an item
\begin{align*}
    g_i\in T_i(R_\ell)\setminus H_\ell \,.
\end{align*}
Since $g_i\in \Gamma_\ell(X)\setminus H_\ell$, it is uniquely demanded
with respect to $X$. Therefore no two different agents in $X\setminus L_\ell$
can choose the same such item. Hence
\begin{align*}
|\Gamma_\ell(X)\setminus H_\ell|
    \geq |X\setminus L_\ell|.
\end{align*}
Intuitively, the set of items shrinks faster than the set of agents.
Then we can write, $ \Gamma_\ell(X)=H_\ell \cup (\Gamma_\ell(X)\setminus H_\ell)$. Combining Hall's violation, i.e., that $|\Gamma_\ell(X)|<|X|$, with the above inequality, we get
\begin{align*}
    |H_\ell|
    =
    |\Gamma_\ell(X)|-|\Gamma_\ell(X)\setminus H_\ell|
    <
    |X|-|X\setminus L_\ell|
    =
    |L_\ell|.
\end{align*}
Thus, in every iteration that fails to produce a \perfect the number of hidden items is strictly smaller
than the number of agents whose current top set is completely removed. Since
each agent has at most $k$ distinct value levels, the same agent can lose its
whole current top set at most $k-1$ times. Hence, over the whole process,
there can be at most $n(k-1)$ such losses in total. In fact, if the maximum number
of distinct values an agent sees in $S$ happens to be $\xi \le k$, there can be at most $n(\xi-1)$ such losses in total.

Suppose we have reached this point, where---without having found any \perfect yet---for all agents 
their current top set is now 
all the remaining items, i.e., for some $x \le (n-1)\xi+1$, all the top sets $T_i(R_{x})$, for $i\in N$, are equal to the whole set $R_{x}$ for the first time. 
If the remaining items are at least $n$ we are guaranteed to have a \perfect and we are done. Indeed, 
we claim that this is the case. First we observe that  $x \ge \xi$, as it takes at least
$\xi -1$ reductions of the set of available items starting from $R_1$  in order for everyone to end 
up seeing just one value in the set $R_{x}$. 
We have 
\begin{align*}
|R_x| &\ge (n-1)k+1 - \sum_{\ell = 1}^{x-1} |H_{\ell}| \ge (n-1)k+1 - \sum_{\ell = 1}^{x-1} (|L_{\ell}| -1) \\
&\ge
(n-1)k+1 - n(\xi - 1)+ (x-1)\ge (n-1)k+1 - (n-1)(\xi - 1) \\[2pt]
&\ge (n-1)k+1 - (n-1)(k - 1) \ge n\,,   
\end{align*}
as claimed. We conclude that with $(n-1)k+1$ items, we can guarantee an EF allocation 
for $n$ agents in a personalized $k$-value instance.
\end{proof}

Here we shall briefly discuss the differences between the proof of Lemma \ref{lem:perfect_matching_exists_for_n_agents} 
and that of Lemma \ref{lem:perfect_matching_exists}. A first issue is that not all sets violating the condition of Theorem \ref{thm:Hall} work. It is crucial that we find the right-hand side of a \textit{maximal} such set in line \ref{line:Hall}, otherwise we may fail to compute a \perfect before running out of live items in the set $R$. A second matter is how to efficiently compute such sets. Thankfully, it turns out that this can be reduced to an easy reachability problem \citep{AMNS17}. A last subtle point is that the top choice graph does not directly imply an upper bound on the values of all live items. For $n=2$, we know that after $\ell$ failures to find a matching we have deleted the $\ell$ most preferred items of both agents. For $n\ge3$, however, each failure to find a matching might affect only a subset of agents, yet we may delete multiple most preferred items for each one of them. To deal with this challenge we resort to careful counting arguments that generalize the simple idea of hiding a single contested item at a time.

Once we do have Lemma \ref{lem:perfect_matching_exists_for_n_agents}, the proofs of Theorems \ref{thm:sas_positive} and \ref{thm:dab_positive} are not particularly hard, especially the latter.

\begin{proof}[\textbf{Proof of Theorem~\ref{thm:sas_positive}}]
We consider the times during the execution of Algorithm~\ref{alg:online-buffer} where an item is allocated to some agent. This can occur in lines \ref{line:Q_non-empty}, \ref{line:allocate_matching},\ref{line:last_matching}, and \ref{line:DRR}.

Let us examine first the assignments that take place throughout the execution of the for loop of Algorithm~\ref{alg:online-buffer}, in lines \ref{line:Q_non-empty} and \ref{line:allocate_matching}, that is, as long as new items keep arriving. 
During these iterations, Algorithm~\ref{alg:online-buffer} allocates items based on the left-perfect matchings it produces. Suppose that throughout the for loop, it gets to produce the matchings $\mu_1, \mu_2,...,\mu_\ell$ (in that order). 
Since we can allocate at most one item per step, the $n$ items of a \perfect cannot be allocated simultaneously, and thus they are inserted into $Q$ and allocated one by one in lexicographic order, before the next \perfect is found. This means that the sequence of allocation decisions is exactly 
\[
    \mu_1(1),...,\mu_1(n),\mu_2(1),...,\mu_2(n),...,\mu_\ell(1),...,\mu_\ell(n) \,.
\]
Note that  we start with the empty allocation, which is EF. Since in line \ref{line:k+1}, we examine $(n-1)k+1$ items (namely the items in the buffer plus the newly arrived item), Lemma  \ref{lem:perfect_matching_exists_for_n_agents} ensures that a \perfect will be constructed during the while loop that starts at line \ref{line:while}.
Furthermore, every perfect matching is constructed on the graph $G(R)$, and thus Lemma~\ref{lem:A_perfect_ matching_is_enough_for_EF} implies that after all $n$ items of the matching are allocated, the allocation remains EF. This means that during the execution of the for loop of Algorithm~\ref{alg:online-buffer}, the partial allocations we  construct are EF every $n$ time steps. 
For the intermediate steps, 
where we allocate the first $n-1$ items of each matching, the allocation is trivially EF1, since it was EF right before and each agent gets exactly one new item; 
for any pair of agents $(i,j)$ such that $i$ envies $j$ because $\mu(j)$ has been already allocated and $\mu(i)$ is not, by ``removing'' $\mu(j)$ EF is re-established.
Hence, these allocations are always EF1.

Consider now the assignments that take place in line \ref{line:last_matching}, after the for loop is over. These follow the same reasoning as before, since they concern the allocation of the items that belong to the last produced matching, $\mu_{\ell}$. 
Therefore, once the second while loop terminates, right before line \ref{line:DRR}, the current allocation is EF.
Thus, \textit{so far}, the allocation is temporal-\efo and \ef every $n$ steps for the first $m-(n-1)k$ time steps that an item is assigned.

Finally, the remaining (at most $(n-1)k$) items that sit in the buffer, are allocated using the Double Round-Robin algorithm of~\citet{aziz2022fair}. 
About these last $|B|$ items, first note that they may not admit a matching at all. Therefore, the periodic guarantee of EF is not relevant beyond this point. Second, by Lemma~\ref{lem:DRR_is_TEF1} in Appendix~\ref{sec:DRRisTEF1}, Double Round-Robin allocates the items sequentially, maintaining an \efo allocation throughout its execution. Since the allocation before that was EF and we essentially run Double Round-Robin offline, it follows that the allocation remains temporal-EF1 throughout this phase as well. 

Finally, it is not hard to see that all the steps involved---allocating items from $Q$, constructing $G(R)$, finding at most $(n-1)k$ maximum cardinality matchings in $G(R)$, finding $H$, updating $R$ and $Q$, running Double Round-Robin and allocating the last items according to its output---run in polynomial time. Hence, Algorithm~\ref{alg:online-buffer} runs in polynomial time overall.
\end{proof}

% \smallskip 

\begin{proof}[\textbf{Proof of Theorem~\ref{thm:dab_positive} for general $\bm{n}$}]
This is a simpler version of the proof of Theorem \ref{thm:sas_positive} and is nearly identical to the proof we gave for the case of two agents. First, notice that now an item may be allocated to an agent only in lines \ref{line:allocate_matching_DAB} and \ref{line:DRR} of Algorithm~\ref{alg:online-buffer}. In particular, for as long as new items keep arriving, Algorithm~\ref{alg:online-buffer} allocates items in line \ref{line:allocate_matching_DAB} based on the left-perfect matchings it computes and which are guaranteed to exist as long as there are enough items by Lemma \ref{lem:perfect_matching_exists_for_n_agents}. Each of these matchings is allocated at once, thus the allocation remains \ef in every step until items stop arriving, say at time $\tau$. When this happens, we run Double Round-Robin offline on the items left in $B$. This results in a potential (i.e., still unallocated) \efo partial allocation $\mathcal{B} = (B_1, \ldots, B_n)$ of all the items in $B$ to the agents.
Adding $\mathcal{B}$ in a single time step to the existing \ef partial allocation, results in the \efo final allocation $(A^{\tau}_1\cup B_1, \ldots, A^{\tau}_n\cup B_n)$. Overall, the allocation computed by Algorithm~\ref{alg:online-buffer} is temporal-\ef/\efo. Moreover, every step of the computation  is  done in polynomial time as we argued in the proof of Theorem \ref{thm:sas_positive}.
\end{proof}

% % % % % % % % % % % % % % % % % %
\subsection{Limitations on the Power of Buffers} \label{sub:impossibility_buffer}
% % % % % % % % % % % % % % % % % %
A natural question at this point is whether one could do equally well using smaller buffers. In the case of the DAB model and of temporal-\ef/\efo the answer is no and this is already hinted at in the proof of Lemma \ref{lem:perfect_matching_exists_for_n_agents}. The next theorem makes this explicit, also showing that our Theorem \ref{thm:dab_positive} is tight.

\begin{theorem}\label{thm:buffer-impossibility-batches}
Let $k\ge 2$ be an integer and $\mathcal{A}$ be a deterministic online algorithm in the DAB model with a buffer of size $(n-1)k - 1$. Then $\mathcal{A}$ may fail to produce an \ef allocation in at least half of the time steps it updates the allocation for $k$-value instances, even if all items are goods or chores. 
\end{theorem}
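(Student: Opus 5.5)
We construct an adversarial instance with \emph{identical} valuations and show that, in it, no two consecutive allocation updates can both result in an \ef allocation. Fix a constant $C > n$ and let all agents share the $k$ values $\alpha_\ell = C^{k-\ell+1}$, $\ell\in[k]$, for goods. For chores we use $-\alpha_\ell$, and the argument is symmetric. Since valuations are identical, a partial allocation is \ef if and only if all bundles have the same value. The adversary's only freedom is the value level of each arriving item. Its invariant is that the set of live items (buffer plus the arriving item) never contains more than $n-1$ items of the same level.

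\textbf{Step 1: the adversary can maintain the invariant.} When an item is about to arrive, the buffer holds at most $(n-1)k-1$ items, each level appearing at most $n-1$ times. By pigeonhole, some level appears at most $n-2$ times, and the adversary sends an item of that level. The live set then has at most $(n-1)k$ items and still at most $n-1$ per level. Allocating items only removes live items, so the invariant is preserved. Moreover, whenever the buffer is full, $\mathcal{A}$ must allocate at least one item at that step. Hence, by making the stream long, the number of update steps is as large as we like.

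\textbf{Step 2: no update that starts from an \ef allocation can end in an \ef allocation.} Suppose that before an update all bundles have equal value and afterwards they again do. Let $D_i$ be the batch given to agent $i$ in this update, with $D_i$ drawn from the live items. Then all $v(D_i)$ are equal. Since the update allocates something and all values are positive, some $D_i$ is nonempty, so every $D_i$ is nonempty (in the chores case, all values are negative and the same holds).

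Each batch has at most $(n-1)k$ items, so its value is $\sum_\ell c_\ell C^{k-\ell+1}$ with each multiplicity $c_\ell\le n-1<C$. This is a base-$C$ representation with distinct digits per level, so equal sums force equal multisets of levels. Thus every $D_i$ contains the same multiset, and some level $\ell$ occurs in all $n$ batches. That requires at least $n$ live items of level $\ell$, contradicting the invariant.

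\textbf{Step 3: counting.} The initial (empty) allocation is \ef. By Step 2, every update applied to an \ef allocation produces a non-\ef one. Therefore, among the updates $u_1,u_2,\dots$, whenever $u_j$ results in an \ef allocation, $u_{j+1}$ does not, and $u_1$ does not either. Consequently at least $\lceil T/2\rceil$ of any $T$ consecutive update steps produce non-\ef allocations. The final (possibly \efo) step is counted among these, which is consistent with the statement.

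\textbf{Main obstacle.} The delicate point is Step 2. We must rule out equal-value increments coming from \emph{different} multisets, which is why the base-$C$ choice of values with $C$ larger than any possible multiplicity is essential. We must also handle the fact that some agents might receive nothing in an update, which is resolved by equality of the increments together with strict sign.

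We also need the identical-valuation reformulation of \ef to hold after the update even though $\mathcal{A}$ may hold items for arbitrarily long. This is fine because only live items can be allocated, and the invariant constrains exactly the live set at each moment. The requirement $k\ge2$ only matters for the buffer size $(n-1)k-1$ to be nontrivially positive, i.e., for the buffer to be usable at all.
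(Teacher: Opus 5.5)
Your proof is correct and follows essentially the paper's route. The paper uses identical valuations with geometrically separated levels ($n^{\ell-1}$, i.e., essentially your $C=n$), keeps at most $n-1$ live items per level by replenishing each allocated item with an identical copy instead of your pigeonhole choice of the next level, and shows that an equitable allocation can never be extended by a nonempty batch to another equitable one, which forces alternation from the first update on. One small slip: your bound of $\lceil T/2\rceil$ non-\ef updates holds for the first $T$ updates but not for an arbitrary window of $T$ consecutive ones, where you only get $\lfloor T/2\rfloor$; the statement needs only the former.
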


\begin{proof}%[\textbf{Proof of Theorem \ref{thm:buffer-impossibility-batches}}]
Suppose we have a deterministic allocation algorithm $\mathcal{A}$ with a buffer of size $b = (n-1)k - 1$.
We are going to construct a sequence of items that forces $\mathcal{A}$ to fail to produce an \ef allocation in at least half of the time steps where the allocation is updated.
This sequence will use $k$ distinct values, thus implying the bound of the statement. We are going to show the result for goods; the proof for chores is essentially identical, the only difference being the signs of all the values involved.

First, consider the goods $g_1, g_2, \ldots, g_{b+1}$, in this order, such that $v_i(g_j) = n^{\ell - 1}$, where $\ell = \lceil j / (n-1) \rceil$, for $i\in N, j\in [b+1]$, as shown below. For the sake of presentation, the goods are grouped in groups of size $n-1$ as follows: $G_{\ell}=\{g_{(\ell-1)(n-1)+1},\ldots,g_{\ell(n-1)}\}$, for $\ell \in [k]$ i.e., $G_1$ contains the first $n-1$ goods of value $1$ each, $G_2$ contains the next $n-1$ goods of value $n$ each, and so on. 
% \begin{table}[h!]
	\begin{center}  
	\begin{tabular}{lcccccc}
		& $g\in G_1$ & $g\in G_2$ & $g\in G_3$  &\ldots &  $g\in G_{k-1}$ & $g\in G_k$  \\[5pt]
		agent 1: \hspace{0.3em} & $1$ & $n$ & $n^2$ & \ldots &   $n^{k-2}$ & $n^{k-1}$  \\
		agent 2: \hspace{0.3em} & $1$ & $n$ & $n^2$ & \ldots &   $n^{k-2}$ & $n^{k-1}$  \\
        \quad $\vdots$ & $\vdots$ & $\vdots$ & $\vdots$ & $\ddots$ & $\vdots$ & $\vdots$  \\[4pt]
        agent $n$:\hspace{0.3em}& $1$ & $n$ & $n^2$ & \ldots &    $n^{k-2}$ & $n^{k-1}$ 
	\end{tabular}
	\end{center}
% \end{table}

We begin with two simple observations; first, there are not enough copies of each value for everyone and,  second, even all goods of value less than $n^{\ell - 1}$ do not sum up to this value for any $\ell \in [k]$. To see the latter:
\[(n-1) \sum_{i=0}^{\ell - 2} n^i = (n-1) \frac{n^{\ell - 1} - 1}{n-1} = n^{\ell - 1} - 1< n^{\ell - 1} \,.\]
As a result of these simple facts, it is impossible to construct an \ef allocation by using \textit{any} non-empty subset of $\{g_1, g_2, \ldots, g_{b+1}\}$. 

Now, let $\tau_1, \tau_2, \ldots$ be the time steps during which $\mathcal{A}$ allocates at least one good. We have that $\tau_1 \le b+1$, since by the end of time step $b+1$ the algorithm must have allocated at least one good. No matter what the allocation $\mathcal{A}^{\tau_1}$ is, it fails to be \ef. 

From this point on, whenever algorithm $\mathcal{A}$ allocates one or more goods, the stream of goods replaces them (in arbitrary order) with identical copies, so that the set of items available (i.e., the goods in the buffer plus the next good on the sequence) is always equivalent to a subset of $\{g_1, g_2, \ldots, g_{b+1}\}$. 

Suppose that $\mathcal{A}^{\tau_s}$ is \ef for some $s\ge 2$. Since the valuation functions of all agents are identical, this means that the allocation $\mathcal{A}^{\tau_s}$ is equitable, i.e., all agents have the exact same value for all allocated bundles. Then, however, $\mathcal{A}^{\tau_{s+1}}$ cannot be \ef as it is augmented by (essentially) a non-empty subset of $\{g_1, g_2, \ldots, g_{b+1}\}$, completing the proof.
\end{proof}

When one turns to the SAS model, Theorem \ref{thm:buffer-impossibility-batches} does not have any nontrivial implications. On the one hand, the SAS model is much weaker in how it allocates the items \textit{but so is the benchmark} we need to compare against, namely, being \efo rather than \ef in every step. Nevertheless, we show that, even in the case of two agents, a buffer of size sublinear in the number of distinct values is not sufficient for computing  $\beta$-temporal-\efo allocations, for any $\beta \in (0,1]$. To achieve that, we need a novel recursive construction of a highly nontrivial adversary and an equally delicate analysis. We view this impossibility result as one of the technical highlights of this work.

    \begin{theorem}\label{thm:buffer-impossibility-single_item}
		Let $k\ge 3$ be an integer, $\beta \in (0,1]$, and $\mathcal{A}$ be a deterministic online algorithm in the SAS model  that uses a buffer of size $\lfloor\sqrt{(k - 3)/6}\rfloor$. 
        Then $\mathcal{A}$ cannot always maintain a $\beta$-temporal-\efo allocation for more than $5k/6$ time steps for $k$-value instances, even if there are only two agents and all items are goods or chores.  
	\end{theorem}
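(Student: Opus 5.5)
We would build a single adaptive adversary for two agents and handle chores by flipping signs, as in the proofs of Theorems~\ref{thm:k-value_impossibility_lookahead} and~\ref{thm:3-value_impossibility_lookahead}. Set $b=\lfloor\sqrt{(k-3)/6}\rfloor$ and fix a base $C$ much larger than $2/\beta$. Every value the adversary uses will be of the form $C^{j}$ for $j\in\{0,\dots,k-1\}$, so the instance is a $k$-value instance.

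The first step is a combinatorial characterization of $\beta$-EF1 in this regime. Because $\sum_{i<j}C^{i}<C^{j}/(C-1)$, the value of a bundle is governed by its largest item. From this we will derive the following rule for two bundles in which every item value is a distinct power of $C$. The allocation is $\beta$-EF1 for an agent only if, after deleting the top item of the bundle she envies, the next largest item there is not larger than her own largest item. This is the analogue of the geometric-sum computations in the proof of Theorem~\ref{thm:k-value_impossibility_lookahead}. The rule turns the fairness constraint into a purely ordinal condition on the top two items of each bundle, as each agent ranks them. The second ingredient is that in the SAS model, once the buffer is full, every arrival forces exactly one allocation. So the algorithm's only freedom is which live item to release and to whom.

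The adversary will be recursive and organized in $b$ phases. Each phase will use a fresh block of about $6b$ value levels, placed above all levels used so far, and will last about $5b$ time steps. The agents' valuations disagree on a few items per phase, as in the final gadget in the proof of Theorem~\ref{thm:3-value_impossibility_lookahead}, where agent~1 ranks $g_{m-2}$ above $g_{m-1}$ and agent~2 ranks them the other way. The invariant we aim to keep at the end of phase $p$ has three parts:
\begin{itemize}
\item[(i)] the current allocation is ``tight'', meaning that for at least one agent the ordinal condition above holds with equality;
\item[(ii)] at least $p$ buffer slots are occupied by items from lower blocks; such items are worth less than anything in the current block to both agents, so releasing them can never repair envy created by current-block items;
\item[(iii)] the algorithm has no further freedom, since any deviation from the forced responses violates $\beta$-EF1 immediately.
\end{itemize}
Within a phase the adversary argues by cases on the algorithm's choice, much as in the forced-alternation arguments above. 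In each case either $\beta$-EF1 fails at once, or the algorithm is forced to keep one new ``junk'' item in the buffer, or the next tight configuration is reached. The bookkeeping then gives about $6b^2+3\le k$ levels and about $5b^2+O(1)\le 5k/6$ time steps in total. After $b$ phases the buffer is entirely clogged with junk, so the algorithm has effectively no buffer. At that point a constant-size disagreeing gadget, using the three spare levels, forces a violation in the manner of Theorem~\ref{thm:3-value_impossibility}.

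The main obstacle is designing the single phase and proving the monotone progress in invariant~(ii). The algorithm may release old buffered items at any time, and it may interleave its releases so as to avoid committing a junk slot. The step I would try first is a potential argument: define a potential equal to the number of buffered items that lie below the current block. I would then show that every algorithm response within a phase either violates $\beta$-EF1 or does not decrease the potential, and that at least one response per phase strictly increases it. The adversary adapts by replacing allocated items with new items whose values are chosen so that the freshly released item never helps. The delicate part of the case analysis is making sure that the ordinal tightness condition survives each such replacement.
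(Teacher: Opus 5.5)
Your proposal has a genuine gap, and it lies in invariant (ii), the mechanism meant to carry the induction. Items from lower blocks, which both agents value below everything in the current block, do not clog the buffer. They give the algorithm free moves. Whenever the buffer is full and a new item arrives, the algorithm can hand one such item to some agent and store the newcomer. That release can violate $\beta$-EF1 only if the affected inequality is already tight to within the value of that item, and to block it this would have to hold for both possible recipients. Nothing in your plan creates such numerical tightness. Your ``tight'' is an ordinal condition derived from a criterion that is only necessary, and when $\beta$ is small an ordinally tight configuration still leaves slack on the order of the current block's values.

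The consequences are direct. A buffer holding $b$ junk items is, for postponing commitments, as good as an empty buffer of size $b$. Your potential drops on each such release with no violation. The claim that after $b$ phases ``the algorithm has effectively no buffer'' is false, so the closing appeal to Theorem~\ref{thm:3-value_impossibility}, which concerns algorithms with no buffer, does not go through. Separately, no phase is actually specified, so the counts $6b^2+3$ and $5b^2+O(1)$ are fitted to the statement rather than derived.

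The missing idea is to block a slot with a \emph{high}-value item whose release irrevocably commits the algorithm, and to put the recursion \emph{below} that item rather than in blocks stacked above. The paper inducts on $b$. The adversary $\texttt{Adv}_b(c)$ first sends $g_0$, worth $1$ to both agents. It then simulates $\texttt{Adv}_{b-1}(\gamma)$ with $\gamma=c^{(b+2)/(b+1)}$, all values scaled down by $c^{4(b+2)}$. While $g_0$ stays in the buffer, the algorithm is effectively playing the $(b-1)$-buffer game on the scaled copy, and it fails by the inductive hypothesis.

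Once $g_0$ goes to, say, agent~1, the adversary sends $h_1,\dots,h_{b+2}$ with $v_1(h_i)=c^{i}$ for $i\le b+1$, $v_1(h_{b+2})=c^{b+1}$, and $v_2(h_i)=c^{-b-2+i}$, followed by copies of $h_{b+2}$. The first released $h_j$ must go to agent~2. After that, giving any of $h_{j+1},h_{j+2},\dots$ to either agent violates $1/(c-1)$-EF1 for the other agent. The algorithm may still release leftover buffered items at this stage, but at most $b$ of them can be there, so the violation is postponed by at most $b+1$ steps. This bounded delay, not a growing stock of junk, is what the paper's time and value counts rest on.
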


\begin{proof}%[\textbf{Proof of Theorem \ref{thm:buffer-impossibility-single_item}}]
		Suppose we have a deterministic allocation algorithm $\mathcal{A}$ with a buffer of size $b$.
		We are going to construct an adversary $\texttt{Adv}_b$ who generates a sequence of items that forces $\mathcal{A}$ to fail to produce a $\beta$-\efo allocation within at most $5b^2 + 3$ time steps. Moreover, this sequence will use at most $6b^2 + 3$ distinct values, thus implying the bound of the statement. We are going to show the result for goods; the proof for chores is completely analogous.
		
        Let $c\ge \max\{3, \lceil \beta^{-1} \rceil +1\}$ and notice that $c$ is such that $(c-1)^{-1} \le \beta$, so it suffices to show that $\mathcal{A}$ fails to produce a $1/(c-1)$-\efo.
        The proof is by induction on the buffer size $b$; hence, our construction is going to be recursive. For $b = 0$, consider the adversary $\texttt{Adv}_0(c)$, parameterized by $c$, who generates a stream of goods that begins with $g_1$, such that $v_1(g_1) = v_2(g_1) = 1$. It is without loss of generality to assume that $\mathcal{A}$ assigns $g_1$ to agent 1; if not, our adversary may switch the values the items have for agents 1 and 2. Then, the following items are $g_2, g_3$, such that $v_1(g_2) = v_1(g_3) = c$, $v_2(g_2) = 1/c$ and $v_2(g_3) = 1$, as shown below.
		% \begin{table}[h!]
			\begin{center}  
				\begin{tabular}{lcccc}
					& $g_1$ & $g_2$ & $g_3$ &\ldots \\
					agent 1: \hspace{1em} & \scircled{1} & c &c &  \ldots  \\
					agent 2: \hspace{1em} & 1 & \circled{1/c} & 1 &  \ldots 
				\end{tabular}
			\end{center}
			Given that $g_1$ is added to $A_1$, either $g_2$ is also added to $A_1$ and the resulting allocation $(\{g_1, g_2\}, \emptyset)$ is not $1/(c-1)$-\efo from the point of view of agent 2, or $g_2$ is added to $A_2$; we assume the latter.
			Now, however, no matter who receives $g_3$, the resulting allocation is only $1/c$-\efo and algorithm $\mathcal{A}$ fails to maintain a $1/(c-1)$-\efo allocation within the first 3 time steps, using 3 distinct values, for any $c\ge 3$. Further, all values used are at most $c^{0+1}$.
			
			So, for any algorithm $\mathcal{A}$ with a buffer of size $b-1\ge 0$ and for any $c\ge 3$, assume that there is an adversary $\texttt{Adv}_{b-1}(c)$ that forces $\mathcal{A}$ to produce an  allocation which is not $1/(c-1)$-\efo within at most $5(b-1)^2 + 3$ time steps by using at most $6(b-1)^2 + 3$ distinct values and these values are upper bounded by $c^{b+1}$.
			
			For our inductive step, assume that algorithm $\mathcal{A}$ has a buffer of size $b\ge 1$ and fix any $c\ge 3$. 
			Consider the adversary $\texttt{Adv}_b(c)$ who generates an initial stream of goods $g_0, g_1, \ldots, g_{b}$, such that 
			$v_1(g_i) = v_2(g_i) = 1/c^{4 (b+2) i} := \varepsilon^i$ for $i\in \{0, 1, \ldots, b\}$, as shown below, where the vertical line indicates the end of the buffer. 
			% \begin{table}[h!]
				\begin{center}  
					\begin{tabular}{lccccc|cc}
						& $g_0$ & $g_1$ & $g_2$   &\ldots & $g_{b-1}$ & $g_b$  & \ldots \\[5pt]
						agent 1: \hspace{1em} & 1 & $\varepsilon$ & $\varepsilon^2$ &  \ldots & $\varepsilon^{b-1}$ & $\varepsilon^b$ & \ldots \\
						agent 2: \hspace{1em} & 1 & $\varepsilon$ & $\varepsilon^2$ & \ldots & $\varepsilon^{b-1}$ & $\varepsilon^b$ & \ldots 
					\end{tabular}
				\end{center}
				% \end{table}
			Before we go any further, we need the following simple observation: for $\gamma = c^{\frac{b+2}{b+1}}$,  goods $g_1, \ldots, g_{b}$ are exactly what $\texttt{Adv}_{b-1}(\gamma)$ would generate, just scaled down by $c^{4(b+2)}$. Indeed, $\texttt{Adv}_{b-1}(\gamma)$ would initially generate goods $\hat{g}_0, \ldots, \hat{g}_{b-1}$ such that, for any $i\in \{0, 1, \ldots, b-1\}$:
			\begin{align*}
				v_1(\hat{g}_i) = v_2(\hat{g}_i) &= \gamma^{-4 ((b-1)+2) i} =  c^{\frac{b+2}{b+1}\cdot (-4 (b + 1) i)}\\
				& =  c^{- 4 (b+2) i} = c^{4(b+2)} c^{- 4 (b+2) (i+1)}  \\
				& = c^{4(b+2)} \cdot v_1(g_{i+1}) = c^{4(b+2)} \cdot v_2(g_{i+1}) \,.
			\end{align*}
			Also, notice that $\gamma \ge  c\ge 3$. 
			
			Given this observation, we can partially define the adversary $\texttt{Adv}_b(c)$ in terms of the adversary $\texttt{Adv}_{b-1}(\gamma)$: as long as item $g_0$ is not allocated,  $\texttt{Adv}_b(c)$ simulates $\texttt{Adv}_{b-1}(\gamma)$ but with all values scaled down by $c^{4(b+2)}$ and pretending that $g_1, \ldots, g_{b}$ are $\hat{g}_0, \ldots, \hat{g}_{b-1}$. By the inductive hypothesis, in the first $5(b-1)^2 + 4 \le 5b^2 +3$ time steps either item $g_0$ is allocated or $\mathcal{A}$  fails to maintain a $1/(\gamma-1)$-\efo (and, thus, a $1/(c-1)$-\efo) allocation
            by using at most $b+1 + 6(b-1)^2 + 3 < 6b^2 + 3$ distinct values in total (all of which are at most $1$ at this point; see below). We assume the former, i.e., $g_0$ gets allocated during some time step $t_0 \le 5(b-1)^2 + 4$. Without loss of generality, we may also assume that it was agent 1 who got $g_0$, the other case being completely symmetric.

			So, at the end of time step $t_0$, agent 1 has received an item of value $1$ and---possibly---some items of smaller value. Recall that the value of each such item is a value that $\texttt{Adv}_{b-1}(\gamma)$ could generate but scaled down by $c^{4(b+2)}$. By the inductive hypothesis, any value generated by $\texttt{Adv}_{b-1}(\gamma)$ is at most $\gamma^{(b-1)+1} = \gamma^b$. Overall,
			\begin{align*}
				v_1(A^{t_0}_1) &\le 1 + (5(b-1)^2 + 4)\cdot \gamma^b \cdot c^{-4(b+2)} %\\
				= 1 + (5(b-1)^2 + 4)\, c^{-\frac{(3b+4)(b+2)}{b+1}} \\
				& < 1 + (5(b-1)^2 + 4)\, c^{-(3b+4)} %\\
				< 1 + 0.5\, c^{-(2b+3)} \,,
			\end{align*}	
			where the last inequality is a matter of simple calculus, as it reduces to showing that $0.5\, c^{x+1}> 5x^2-10x+9$ for all $x\ge 1$ and all $c\ge 3$.
			Similarly, at the end of $t_0$, agent 2 has received items of total value $v_2(A^{t_0}_2) \le 0.5\, c^{-(2b+3)}$. Further, each item already in the buffer (including the $t_0$-th good $g_{t_0}$) has value at most $\gamma^b \, c^{-4(b+2)}$ for either agent and, thus, the total value in the buffer, say $v_1(B_{t_{0}})$ and $v_2(B_{t_{0}})$ respectively, is at most $b \, \gamma^b \, c^{-4(b+2)} < b\, c^{-(3b+4)} < 0.5\, c^{-(2b+4)}$, where the last inequality follows from the simple fact that $0.5\, c^{x}> x$ for all $x\ge 1$ and all $c\ge 3$. 
			
			From this point onward,  $\texttt{Adv}_b(c)$ generates items with values that are `large' compared to $v_1(A^{t_0}_1)$ and $v_2(A^{t_0}_2)$. The next $b+2$ items are 
			$g_{t_0+1}, g_{t_0+2}, \ldots, g_{t_0+ b+2}$ but we rename them to $h_1, h_2, \ldots, h_{b+2}$ to simplify the notation. These items are such that $v_1(h_i)= c^i$ for $i\in[b+1]$ and $v_1(h_{b+2})= c^{b+1}$, whereas $v_2(h_i)= c^{-b-2+i}$ for $i\in[b+2]$
			as shown below: 
			% \begin{table}[h!]
				\begin{center}  
					\begin{tabular}{lccccccc}
						& $h_1$ & $h_2$ & $h_3$   &\ldots & $h_{b}$ & $h_{b+1}$  & $h_{b+2}$ \\[5pt]
						agent 1: \hspace{1em} & $c$ & $c^2$ & $c^3$ &  \ldots & $c^{b}$ & $c^{b+1}$ & $c^{b+1}$ \\
						agent 2: \hspace{1em} & $c^{-b-1}$ & $c^{-b}$ & $c^{-b+1}$ & \ldots & $c^{-2}$ & $c^{-1}$ & $1$ 
					\end{tabular}
				\end{center}
				% \end{table}
			After this point, if needed, $\texttt{Adv}_b(c)$ generates copies of $h_{b+2}$ for $h_t$, $t> b+2$. 
			
			Eventually, by the end of time step $t_0+b+1$ algorithm $\mathcal{A}$ will be forced to allocate at least one of $h_1, h_2, \ldots, h_{b+1}$. Let $j$ be the index of the very first such item that gets allocated, say at time $t_1 \in \{t_0+1,\ldots, t_0+b+1\}$. We claim that no item among $h_{j+1}, h_{j+2}, \ldots$ can be allocated while maintaining $1/(c-1)$-\efo or, equivalently, the \textit{first} item among $h_{j+1}, h_{j+2}, \ldots$ that is allocated, forces $\mathcal{A}$ to violate $1/(c-1)$-\efo. Towards proving this claim, first notice that at the beginning of time step $t_1$ agent 2 is envious and her total value is less than $v_2(h_j)$. Indeed,  
			%\begin{align*}
			\[	v_2(A^{t_1 - 1}_2) \le v_2(A^{t_0}_2) + v_2(B_{t_{0}}) 
			< 0.5\, c^{-(2b+3)} + 0.5\, c^{-(2b+4)} 
			< c^{-(2b+3)}\,,\]
			%\end{align*}
			whereas $v_2(A^{t_1 - 1}_1) \ge v_2(g_0) = 1$ and $v_2(h_j) \ge v_2(h_1) = c^{-b-1}$. Thus, if $h_j$ was given to agent 1, the resulting allocation would not be $1/(c-1)$-\efo from the point of view of agent 2. We conclude that $\mathcal{A}$ allocates $h_j$ to agent 2. Next, following a similar argument, we claim that after $t_1$, and using only items up to $h_j$, \textit{both} agents are envious and prefer any item among $h_{j+1}, h_{j+2}, \ldots$ to their current bundle \textit{by a factor greater than} $c-1$.
			
			Consider the beginning of any time step $t > t_1$ and assume that none of $h_{j+1}, h_{j+2}, \ldots$ has been allocated. We begin with agent 1:
			\begin{align*}
				v_1(A^{t - 1}_1) &\le v_1(A^{t_0}_1) + v_1(B_{t_{0}}) + \sum_{i=1}^{j-1} v_1(h_i) %\\
				< 1+ 0.5\, c^{-(2b+3)} + 0.5\, c^{-(2b+4)} + \sum_{i=1}^{j-1} c^i \\
				& < c^{-(2b+3)} + \frac{c^j-c}{c-1}%\\
				\le \frac{c^j}{c-1} + \Big( \frac{1}{3^5} - \frac{c}{c-1}\Big) %\\
				< \frac{c^j}{c-1} \le \frac{v_1(h_\ell)}{c-1} \,,
			\end{align*}
			for any $\ell\ge j$, also implying that $(c-1) \, v_1(A^{t - 1}_1) < v_1(h_j) \le v_1(A^{t - 1}_2)$. Next, for agent 2:
			\begin{align*}
				v_2(A^{t - 1}_2) &\le v_2(A^{t_0}_2) + v_2(B_{t_{0}}) + \sum_{i=1}^{j} v_2(h_i) %\\
				< c^{-(2b+3)} + \sum_{i=1}^{j} c^{-b-2+i} \\
				& < c^{-(2b+3)} + \frac{c^{-b-1+j}- c^{-b-1}}{c-1}%\\
				\le \frac{c^{-b-1+j}}{c-1} + \big( c^{-2b-3} - c^{-b-2}\big) \\
				&< \frac{c^{-b-1+j}}{c-1} \le \frac{v_2(h_\ell)}{c-1} \,,
			\end{align*}
			for any $\ell\ge j+1$, also implying that $(c-1) \, v_2(A^{t - 1}_2) < v_2(h_{j+1}) < 1 \le v_2(A^{t - 1}_1)$.
			
			We conclude that no item among $h_{j+1}, h_{j+2}, \ldots$ can be allocated without violating $1/(c-1)$-\efo. However, within at most $b+1$ time steps after $t_1$, $\mathcal{A}$  will be forced to allocate such an item, failing to maintain a $1/(c-1)$-temporal-\efo allocation. Note that this happened in at most 
			\[t_0 + (t_1 - t_0)  + b + 1 \le 5(b-1)^2 + 4  +b + 1 + b + 1 = 5b^2 -10b +5 + 6 +2b\le 5b^2 + 3\]
			time steps, as claimed. Further, the largest value generated by $\texttt{Adv}_b(c)$ is $c^{b+1}$, the value that agent 1 has for items $h_{b+1}, h_{b+2}, \ldots$. Finally,  the number of distinct values used is at most the $b+1$ values of items $g_0, \ldots, g_{b}$, the $2b+2$ values of items $h_1, \ldots, h_{b+2},\ldots$, and all the values generated by $\texttt{Adv}_{b-1}(\gamma)$, which by the inductive hypothesis are at most $6(b-1)^2+3$, for a total of at most
			\[3b + 3 + 6(b-1)^2 + 3 \le  6b^2 -12b +6 + 6 +3b\le 6b^2 + 3\]
			as claimed. This completes the induction. The statement follows by setting $b = \lfloor\sqrt{(k - 3)/6}\rfloor$ for any given $k\ge3$.
		\end{proof}

%%%%%%%%%%%%%%%%%%%%%%%%%%%%%%%%%%%%%%%%%%
\section{General Additive Valuation Functions}
\label{sec:general_additive}
%%%%%%%%%%%%%%%%%%%%%%%%%%%%%%%%%%%%%%%%%%

So far, we have focused on $k$-value instances but, of course, the ultimate goal is to be able to say something meaningful about instances in which agents have additive valuation functions without any restrictions. Indeed, there is a natural way to use $k$-value instances as a proxy for this: create a discretization of each agent's range of values using $k$ appropriately selected values, round everything up or down to get a $k$-value instance, construct an allocation for the latter, and, finally translate the fairness guarantee for the discretized valuations to an \textit{approximate} guarantee for the original valuation functions. Such approaches have been used for either goods with values in an interval $[1,C]$ for $k=2$ \citep{amanatidis2025online} or chores with strictly negative values for $k = \lceil \log_2 \rho \rceil$ \citep{SongTWZ25}, where $\rho$ is the max-ratio parameter defined below.
Here we fully generalize these approaches and show how our results of Section \ref{sec:main} can be applied to general additive instances with only goods  or only chores.

One first simple observation here is that zero values should be dealt with separately. If one rounds any positive values to $0$ or the other way around, any guarantee for the rounded instance may completely fail for the original.  
\citet{amanatidis2025online} and \citet{SongTWZ25} both work with strictly positive or strictly negative values, respectively,  but there is a simple fix for this: one of the values of our discretization should be $0$ (thus, capturing all the values that are exactly $0$), whereas all positive values should be rounded up and all negative values should be rounded down. 
A relevant parameter is the largest ratio between two non-zero values across all agents, assuming there are such values; we call this \textit{max ratio} for short:
$\rho = \max_{i\in N}\max_{g,h : v_i(h)\neq 0} {v_i(g)}/{v_i(h)}$
% \[\rho = \max_{i\in N}\max_{g,h : v_i(h)\neq 0} \frac{v_i(g)}{v_i(h)}\,,\]
where we use the convention that the max of the empty set is $0$ to cover the case where some agents see everything as zero-valued. Of course, dealing with instances where $\rho = 0$ is trivial, so we care for the case where $\rho > 0$.

\begin{theorem}\label{thm:reduction}
Let $k\ge 2$ be an integer. For any goods-only or chores-only instance with a max ratio $\rho>0$, there is a reduction to a $k$-value instance, so that any temporal guarantee with respect to \ef or \efo we may obtain for the latter (e.g., via Algorithm \ref{alg:online-buffer}) can be translated to the corresponding temporal guarantee for the original instance, at the cost of an additional multiplicative factor of $\rho^{-1/(k-1)}$. 
\end{theorem}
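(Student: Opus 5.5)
We construct the discretization for goods; chores are symmetric, with signs flipped and the direction of rounding reversed. Fix an agent $i$ with at least one nonzero value and let $m_i = \min\{v_i(g) : v_i(g) > 0\}$. Then every positive value of $i$ lies in $[m_i, \rho\, m_i]$. Set $r = \rho^{1/(k-1)}$ and use the $k$ levels
\[
0,\; m_i r,\; m_i r^2,\; \ldots,\; m_i r^{k-1} = \rho\, m_i .
\]
Note that $m_i$ itself is rounded up to $m_i r$, so this is only $k-1$ positive levels plus zero. The rounded valuation is $\hat v_i(g) = 0$ if $v_i(g) = 0$; otherwise $\hat v_i(g) = m_i r^{j}$, where $j \ge 1$ is the smallest index with $v_i(g) \le m_i r^{j}$. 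For chores we take $-m_i r^j$ with $j$ the smallest index such that $|v_i(g)| \le m_i r^j$, i.e., we round negative values down. Agents whose values are all zero get $\hat v_i \equiv 0$.

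This yields a personalized $k$-value instance. Since only ordinal information from $\hat v_i$ is used by Algorithm \ref{alg:online-buffer}, the per-agent scaling by $m_i$ is harmless. The rounding can be done online, because the levels depend only on $m_i$ and $\rho$. This is the step I expect to need care: $m_i$ is not known in advance. I would assume $\rho$ and a lower bound on the minimum value are known, or equivalently normalize so that the levels are fixed. The statement speaks only of a reduction, so I would state this as a standing assumption of the reduction.

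The key pointwise fact is that for every nonzero value $v_i(g)$ we have
\[
|v_i(g)| \le |\hat v_i(g)| \le r\, |v_i(g)| ,
\]
and zero values are preserved exactly. This holds because $|v_i(g)| > m_i r^{j-1}$ when $j \ge 2$, and for $j = 1$ we have $|v_i(g)| \ge m_i$. By additivity, the same sandwich holds for every bundle $S$: $v_i(S) \le \hat v_i(S) \le r\, v_i(S)$ for goods.

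Translation for goods. Suppose $\hat v_i(A_i) \ge \hat v_i(A_j \setminus \{g\})$, possibly with $g = \emptyset$ for EF, and possibly with an extra factor $\beta$ on the right-hand side. Then
\[
r\, v_i(A_i) \ge \hat v_i(A_i) \ge \beta\, \hat v_i(A_j \setminus \{g\}) \ge \beta\, v_i(A_j \setminus \{g\}) ,
\]
so the allocation is $\beta\rho^{-1/(k-1)}$-EF (respectively EF1) for the original instance.

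Translation for chores. Here $\hat v_i \le v_i \le 0$ and $|\hat v_i| \le r|v_i|$. From $\beta\, \hat v_i(A_i \setminus \{g\}) \ge \hat v_i(A_j)$ we get
\[
\beta\, r^{-1} v_i(A_i \setminus \{g\}) \;\ge\; \beta\, \hat v_i(A_i \setminus \{g\}) \cdot r^{-1} r \;\ge\; \hat v_i(A_j) \cdot 1 ,
\]
and therefore
\[
\beta\, r^{-1} v_i(A_i \setminus \{g\}) \ge \hat v_i(A_j) \cdot r^{-1} \cdot r \ge v_i(A_j) ,
\]
after checking the directions with the signs. This translation is applied to each time step's allocation separately. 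The sequence of allocations is identical in the two instances, so every temporal property transfers: temporal-EF1, EF every $n$ steps, and temporal-EF/EF1. In each case only the factor $\rho^{-1/(k-1)}$ is lost, which proves Theorem \ref{thm:reduction}.
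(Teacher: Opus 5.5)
Your discretization is the same as the paper's: after normalizing the smallest nonzero absolute value to $1$, the levels are $0$ and $\pm\rho^{j/(k-1)}$ for $j\in[k-1]$, with goods rounded up and chores rounded down. Your pointwise sandwich and your goods translation are correct.

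The chores translation, however, breaks as written. Its final step asserts $\hat v_i(A_j)\ge v_i(A_j)$, but rounding chores down gives the opposite inequality $\hat v_i(A_j)\le v_i(A_j)$, typically strict. The factor $r^{-1}$ is spent in the wrong place. In your first inequality you apply it to your own bundle, where $r^{-1}v_i\ge v_i\ge \hat v_i$ holds trivially and the factor is wasted, which leaves nothing to bridge $\hat v_i(A_j)$ and $v_i(A_j)$. The correct chain, which is the paper's, does three things. It multiplies the hypothesis by $r^{-1}>0$. It uses $\hat v_i\le v_i$ on $A_i\setminus\{g\}$. It uses $v_i\le r^{-1}\hat v_i$ on $A_j$; this is your $|\hat v_i|\le r|v_i|$ rewritten for nonpositive values. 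The result is
\[
\beta r^{-1} v_i(A_i\setminus\{g\}) \;\ge\; \beta r^{-1}\hat v_i(A_i\setminus\{g\}) \;\ge\; r^{-1}\hat v_i(A_j) \;\ge\; v_i(A_j).
\]
With this repair the argument is complete.

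On the online issue you raise: the paper simply normalizes each agent's nonzero values into $[1,\rho]$ ``without loss of generality'', so it makes the same implicit assumption you state. Note, though, that a mere lower bound $\ell_i\le m_i$ would not suffice. Values as large as $\rho m_i$ could then lie above your top level $\rho\ell_i$. What is actually needed is that the nonzero absolute values of agent $i$ lie in a known window $[\ell_i,\rho\ell_i]$.
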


\begin{proof}%[\textbf{Proof of Theorem \ref{thm:reduction}}]
We construct auxiliary valuation functions that approximate the original valuation functions with only $k$ values. First, notice that because \ef and \efo are scale-free, it is without loss of generality to assume that in a goods-only instance all values an agent $i$ has for any good belong to the set $\{0\} \cup [1,\rho]$, by dividing everything by $i$'s smallest positive value, if such a value exists. Similarly, in a chores-only instance we assume that $v_i(g) \in [-\rho, -1] \cup \{0\}$ for all $i\in N$ and all chores $g\in M$.

Now, given an additive valuation function $v_i$ of an agent $i$, such that $v_i(g)\in [-\rho, -1] \cup \{0\} \cup [1,\rho]$, for all $g\in M$, we define the $k$-value \emph{threshold}  function $\hat{v}_i$ as follows:
	\[\hat{v}_{i}(g)  = \begin{cases} 	\rho^{\frac{j}{k-1}} \,, & \text{if }v_i(g) \in (\rho^{\frac{j-1}{k-1}}, \rho^{\frac{j}{k-1}}] \text{ with } j\in \{2, 3, \ldots, k-1\} \\ 
										\rho^{\frac{1}{k-1}} \,, & \text{if }v_i(g) \in [1, \rho^{\frac{1}{k-1}}] \\ 
										0\,, & \text{if } v_i(g) = 0\\ 
										-\rho^{\frac{1}{k-1}} \,, & \text{if }v_i(g) \in [-\rho^{\frac{1}{k-1}}, -1] \\
										-\rho^{\frac{j}{k-1}} \,, & \text{if }v_i(g) \in [-\rho^{\frac{j}{k-1}}, -\rho^{\frac{j-1}{k-1}}) \text{ with } j\in \{2, 3, \ldots, k-1\}
										\end{cases}\]
	for any $g\in M$. Notice that here we allow $2k-1$ values in the definition of $\hat{v}_{i}$ to avoid repetition. However, since we assume exclusively goods-only or chores-only instances, $\hat{v}_{i}$ in such instances can only have the $k$ values of the top three or of the bottom three branches.

	Next, we claim that in a goods-only instance, for any set of items $S\subseteq M$ and any agent $i\in N$, it holds that $\rho^{\frac{-1}{k-1}} \hat{v}_{i}(S)  \le v_i(S) \le \hat{v}_{i}(S)$. To see this, first note that $v_i(g)$ is rounded up in order to obtain $\hat{v}_{i}(g)$, so  $v_i(g) \le \hat{v}_{i}(g)$.  
	Moreover, the rounding factor satisfies $\hat{v}_i(g)/v_i(g) \le \rho^{\frac{1}{k-1}}$ (by inspection of the threshold function), so $\rho^{\frac{-1}{k-1}} \hat{v}_i(g) \le  v_i(g)$.
	These inequalities extend to any set of goods, as both ${v}_{i}$ and $\hat{v}_{i}$ are additive.

	Similarly, in a chores-only instance, for any set of items $S\subseteq M$ and any agent $i\in N$, it holds that $ \hat{v}_{i}(S)  \le v_i(S) \le \rho^{\frac{-1}{k-1}} \hat{v}_{i}(S)$.

	Now, suppose that at the end of some time step $t$
	the allocation $(A_1^t, \ldots, A_n^t)$ is $\alpha$-\ef or $\alpha$-\efo with respect to the threshold functions $\hat{v}_1, \ldots, \hat{v}_n$. 
	
	We first argue for the case of goods-only instances, where the parameter $\lambda$ below is $0$ in the case of $\alpha$-\ef and $1$ in the case of $\alpha$-\efo.
	For any $i, j \in N$, we have
	\[v_i(A^{t}_i)   \ge \rho^{\frac{-1}{k-1}}\, \hat{v}_i(A^{t}_i) \ge  \rho^{\frac{-1}{k-1}} \alpha  \min_{S:|S|\le \lambda} \hat{v}_i(A^{t}_j \setminus S)
	\ge \rho^{\frac{-1}{k-1}} \alpha\, \min_{S:|S|\le \lambda} v_i(A^{t}_j \setminus S)\,. \]

	Similarly, for the case of chores-only instances (where again $\lambda$  is $0$  for \ef and $1$ for \efo), for any $i, j \in N$, we have
	\[\rho^{\frac{-1}{k-1}} \alpha  \max_{S:|S|\le \lambda} v_i(A^{t}_i \setminus S)   \ge \rho^{\frac{-1}{k-1}} \alpha  \max_{S:|S|\le \lambda} \hat{v}_i(A^{t}_i \setminus S)  \ge \rho^{\frac{-1}{k-1}}\, \hat{v}_i(A^{t}_j) \ge  v_i(A^{t}_j)\,. \]
	
	Thus, in both cases, $(A_1^t, \ldots, A_n^t)$ is $\rho^{\frac{-1}{k-1}} \alpha$-\ef or $\rho^{\frac{-1}{k-1}} \alpha$-\efo, respectively,  with respect to the original functions ${v}_1, \ldots, {v}_n$.
\end{proof}

Given any goods-only or chores-only additive instance, we call the instance constructed in the proof of Theorem \ref{thm:reduction} the \textit{$k$-auxiliary instance}.
Combining the theorem with Theorems \ref{thm:sas_positive} and \ref{thm:dab_positive}, we directly get the following corollaries.

\begin{corollary}\label{cor:buffer_additive}
For any goods-only or chores-only instance with a max ratio $\rho>0$, Algorithm \ref{alg:online-buffer} with a buffer of size $(n-1)k$ on the $k$-auxiliary instance computes a  $\rho^{\frac{-1}{k-1}}$-temporal-\efo allocation in the SAS model, and a $\rho^{\frac{-1}{k-1}}$-temporal-\ef/\efo allocation in the DAB model,
with respect to the original valuation functions.
\end{corollary}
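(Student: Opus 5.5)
The plan is to compose Theorem~\ref{thm:reduction} with Theorems~\ref{thm:sas_positive} and~\ref{thm:dab_positive}. Given the online goods-only (or chores-only) instance with max ratio $\rho>0$, we first normalize each agent's values by their smallest nonzero absolute value. Then, each time an item arrives, we compute its rounded values $\hat v_i(g)$ via the threshold functions from the proof of Theorem~\ref{thm:reduction}. This step needs a word of justification. The rounding of an item depends only on that item's own values and on $\rho$, so it can be done online. If $\rho$ is not known up front, I would note that the corollary implicitly treats $\rho$ as a parameter available to the algorithm, as in the cited approaches; I will state this assumption explicitly. The resulting stream is a $k$-value instance in the sense of the reduction. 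Each agent uses at most the $k$ values $0,\rho^{1/(k-1)},\dots,\rho^{(k-1)/(k-1)}$ (or their negatives), so it is in particular a personalized $k$-value instance.

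Next, we run Store-and-Match (Algorithm~\ref{alg:online-buffer}) with buffer size $(n-1)k$ on this auxiliary stream, feeding it $\hat v_1,\dots,\hat v_n$. Every allocation decision is taken verbatim as the decision for the original instance. The sequence of partial allocations $(\mathcal{A}^t)_t$ is therefore literally the same object for both instances.

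By Theorem~\ref{thm:sas_positive}, in the SAS model this sequence is temporal-\efo with respect to $\hat v$. By Theorem~\ref{thm:dab_positive}, in the DAB model it is temporal-\ef/\efo with respect to $\hat v$. Both statements hold with $\alpha=1$.

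Finally, we apply the per-time-step translation in the proof of Theorem~\ref{thm:reduction}. For every $t$, if $\mathcal{A}^t$ is $\alpha$-\ef (respectively $\alpha$-\efo) under $\hat v$, then it is $\rho^{-1/(k-1)}\alpha$-\ef (respectively \efo) under $v$. Applying this at every time step with $\alpha=1$ gives the two claimed guarantees. In the SAS case it gives $\rho^{-1/(k-1)}$-\efo at every $t$. In the DAB case it gives $\rho^{-1/(k-1)}$-\ef at every $t<\tau$ and $\rho^{-1/(k-1)}$-\efo at $\tau$, which matches Definition~\ref{def:temporal-ef}.

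The only mildly delicate point is the DAB case. The translation in Theorem~\ref{thm:reduction} is stated pointwise in $t$, so it preserves the distinction between the \ef steps and the final \efo step. I would also check that the $\rho$-EF1 notion used for goods and chores, with the removed item taken from $A_j$ or $A_i$ respectively, is exactly what the $\lambda=1$ inequality chain produces. Beyond this, nothing new is needed.
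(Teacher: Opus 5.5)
Your proposal is correct and is essentially the paper's own argument, which obtains the corollary directly by running Store-and-Match on the $k$-auxiliary instance, invoking Theorems~\ref{thm:sas_positive} and~\ref{thm:dab_positive} there, and transferring the guarantee at each time step via the translation in the proof of Theorem~\ref{thm:reduction} with $\alpha=1$. One small correction to your side remark on online computability: the rounding depends not only on $\rho$ but also on each agent's normalizing constant (her smallest nonzero $|v_i(g)|$ over all items), which is generally not known online. Since the statement is phrased as running the algorithm on the $k$-auxiliary instance, this does not affect the proof.
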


\begin{corollary}\label{cor:buffer_additive_asymptotic}
For any goods-only or chores-only instance with a max ratio $\rho>0$, Algorithm \ref{alg:online-buffer} with a buffer of size $\Theta(n\log \rho)$ on the corresponding $\Theta(\log \rho)$-auxiliary instance computes  a $\Omega(1)$-temporal-\efo allocation in the SAS model, and a $\Omega(1)$-temporal-\ef/\efo allocation in the DAB model, 
with respect to the original valuation functions.
\end{corollary}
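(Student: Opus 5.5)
The plan is to instantiate Corollary~\ref{cor:buffer_additive} with a parameter $k$ that grows logarithmically in $\rho$. The choice is $k = \lceil \log_2 \rho \rceil + 1$, or more generally $k = \lceil c\log \rho\rceil + 1$ for a fixed constant $c>0$. We also set $k\ge 2$ to handle the degenerate case $\rho \le 2$; there the bound follows directly from Corollary~\ref{cor:buffer_additive} with $k=2$, since then $\rho^{-1}\ge 1/2$.

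With this choice we have $k-1 \ge \log_2 \rho$. Hence
\[
\rho^{-1/(k-1)} = 2^{-\log_2\rho/(k-1)} \ge 2^{-1} = \tfrac12 ,
\]
a constant independent of $n$, $m$ and $\rho$. Substituting into Corollary~\ref{cor:buffer_additive} gives the following for the $k$-auxiliary instance built in the proof of Theorem~\ref{thm:reduction}. Algorithm~\ref{alg:online-buffer} with buffer size $(n-1)k$ yields a $\tfrac12$-temporal-\efo allocation in the SAS model and a $\tfrac12$-temporal-\ef/\efo allocation in the DAB model, both with respect to the original valuations. So the guarantee is $\Omega(1)$.

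It remains to check the buffer size: $(n-1)k = (n-1)(\lceil\log_2\rho\rceil+1) = \Theta(n\log\rho)$ whenever $\rho \ge 2$, and the auxiliary instance is $\Theta(\log\rho)$-valued by construction. If one wants the full $\Theta$ statement to cover the parameter trade-off as well, note that taking $k-1 = \lceil c\log_2\rho\rceil$ gives the factor $2^{-1/c}$. Thus any constant approximation below $1$ is achievable with a buffer of size $\Theta(n\log\rho)$, where the hidden constant depends only on the target.

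There is no real obstacle here, since this is a direct parameter substitution. The only points needing care are the small-$\rho$ regime, where $\log\rho$ may be $0$ or negative (we simply use a constant $k\ge 2$ and interpret $\Theta(\log\rho)$ as $\Theta(1)$ there), and confirming that the reduction in Theorem~\ref{thm:reduction} preserves the temporal structure. The latter holds because the same allocation sequence is evaluated under $v_i$ instead of $\hat v_i$ at every time step.
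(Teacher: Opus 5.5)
Your proposal is correct and follows the paper's argument: the paper obtains this corollary by plugging $k=\Theta(\log\rho)$ into Corollary~\ref{cor:buffer_additive}, exactly as you do with $k=\max\{2,\lceil\log_2\rho\rceil+1\}$, which gives $\rho^{-1/(k-1)}\ge 1/2$ and a buffer of size $(n-1)k=\Theta(n\log\rho)$. Your explicit treatment of the small-$\rho$ regime is a detail the paper leaves implicit; note that $\rho\ge 1$ whenever $\rho>0$ (take $g=h$ in the definition), so $\log\rho$ is never negative, only possibly $0$.
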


In order to put Corollaries \ref{cor:buffer_additive} and \ref{cor:buffer_additive_asymptotic} into perspective, 
recall that any negative result about $k$-value instances directly transfers to general additive instances as well. Specifically, Theorems \ref{thm:3-value_impossibility}, \ref{thm:buffer-impossibility-batches} and \ref{thm:buffer-impossibility-single_item} imply the following analogs for additive instances.

\begin{corollary}\label{cor:3-value_impossibility_additive}
Let $\varepsilon>0$. There is no deterministic online algorithm without a buffer that can always compute $(1/\sqrt{\rho} + \varepsilon)$-temporal-\efo allocations for additive instances with a max ratio $\rho>0$, even  when $n=2$ and all items are only goods or only chores.   
\end{corollary}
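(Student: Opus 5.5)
This follows directly from Theorem \ref{thm:3-value_impossibility} by interpreting its $3$-value instances as additive instances and reading off their max ratio. Fix $\rho > 1$; the case $\rho \le 1$ is degenerate, since then all nonzero values of each agent coincide and the target bound $1/\sqrt{\rho}+\varepsilon > 1$ is unattainable anyway. Apply Theorem \ref{thm:3-value_impossibility} with $c = \rho$. It supplies, for every deterministic online algorithm without a buffer, a two-agent goods-only instance with values in $\{1,\sqrt{\rho},\rho\}$, or a chores-only instance with values in $\{-\rho,-\sqrt{\rho},-1\}$. On that instance the algorithm fails to be $(1/\sqrt{\rho}+\varepsilon)$-temporal-\efo. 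Every such instance is in particular an additive instance, so it remains to check its max ratio.

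For the goods case, every value $v_i(g)$ lies in $[1,\rho]$ and every nonzero $v_i(h)$ lies in $[1,\rho]$. Hence $v_i(g)/v_i(h) \le \rho$, and the max ratio is at most $\rho$. For the chores case, every value lies in $[-\rho,-1]$, so ratios of two negative values again lie in $[1/\rho,\rho]$, and the max ratio is at most $\rho$.

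The one point needing care is that the corollary is phrased for instances ``with a max ratio $\rho$''. I would read this as ``max ratio (at most) $\rho$''. To avoid that reading, I would instead ensure the ratio is attained exactly. The adversarial instances of \citet{wang2025online} use both extreme values $1$ and $c$; if not, one can append a single zero-impact item or rescale appropriately. Under the strict reading, the quantity $1/\sqrt{\rho}$ is monotone in $\rho$, so an instance with max ratio $\rho' \le \rho$ still yields failure of $(1/\sqrt{\rho'}+\varepsilon)$-temporal-\efo. That in turn implies failure of the weaker target $(1/\sqrt{\rho}+\varepsilon)$-temporal-\efo, because $1/\sqrt{\rho'} \ge 1/\sqrt{\rho}$.

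Finally, since the algorithm has no buffer, it is exactly the standard deterministic online algorithm quantified over in Theorem \ref{thm:3-value_impossibility}. The contradiction therefore transfers verbatim, which completes the plan.
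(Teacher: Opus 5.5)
Your main reduction is correct and is the paper's argument. The paper states this corollary as an immediate consequence of Theorem~\ref{thm:3-value_impossibility}. Taking $c=\rho$ gives two-agent goods-only (resp.\ chores-only) instances with all values in $[1,\rho]$ (resp.\ $[-\rho,-1]$), hence max ratio at most $\rho$. On these, every buffer-less deterministic online algorithm fails $(1/\sqrt{\rho}+\varepsilon)$-temporal-\efo. Under the ``at most'' reading, which is all the paper uses, that is already a complete proof.

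The paragraph meant to secure the exact reading is wrong.

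\textbf{The monotonicity step runs backwards.} Since $1/\sqrt{\rho'}\ge 1/\sqrt{\rho}$, $(1/\sqrt{\rho'}+\varepsilon)$-temporal-\efo is the \emph{stronger} guarantee. Failing a stronger guarantee says nothing about a weaker one. The chain is also circular, since failure at $1/\sqrt{\rho}+\varepsilon$ was your starting point. More importantly, no manipulation of the approximation factor changes the max ratio of the witnessing instance, and that is the only thing the exact reading requires.

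\textbf{Your instance-level fixes do not change the max ratio either.} The max ratio is scale-invariant, so rescaling is useless, and a zero-valued item never raises it. The max ratio is also taken agent by agent, so it is not enough that both $1$ and $c$ occur somewhere in the instance; a single agent must see both. For example, in the base case of Theorem~\ref{thm:buffer-impossibility-single_item} the values $1/c,1,c$ all occur, yet each agent's own ratio is only $c$.

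\textbf{A correct fix.} Let $t$ be the step at which the algorithm's allocation violates the guarantee. After it, append two items valued $1$ and $\rho$ (resp.\ $-1$ and $-\rho$) by both agents. The algorithm has neither buffer nor lookahead, so its allocations up to step $t$ are unchanged and the violation persists. Now every agent's values lie in $[1,\rho]$ (resp.\ $[-\rho,-1]$) with both endpoints attained, so the max ratio is exactly $\rho$.

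A minor point: $\rho\in(0,1)$ cannot occur (take $g=h$ in the definition). Theorem~\ref{thm:3-value_impossibility} needs $c>1$, so for $\rho=1$ your claim that the target is unattainable needs its own witness, e.g., three goods valued $1$ by both agents.
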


\begin{corollary}\label{cor:buffer-impossibility-batches_additive}
Let $\mathcal{A}$ be a deterministic online algorithm in the DAB model that uses a buffer of  size up to $m-1$. Then $\mathcal{A}$ may fail to produce an \ef allocation in at least half of the time steps it updates the allocation for additive instances, even if all items are goods or chores.  
\end{corollary}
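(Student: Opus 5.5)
We derive Corollary~\ref{cor:buffer-impossibility-batches_additive} from Theorem~\ref{thm:buffer-impossibility-batches}. In a general additive instance the number of distinct values is not bounded a priori, so for any buffer size we can choose $k$ large enough to defeat it. Concretely, suppose $\mathcal{A}$ uses a buffer of size $b\le m-1$. We pick $k$ with $(n-1)k-1\ge b$; taking $n=2$ this means $k=b+1$. We then run the adversary from the proof of Theorem~\ref{thm:buffer-impossibility-batches} with this $k$. All agents have identical valuations with values $n^{\ell-1}$ for $\ell\in[k]$, which is a legitimate additive goods-only instance. Negating all values gives the chores-only version.

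The first step is to check that the adversarial argument only needs the buffer to be at most $(n-1)k-1$. An algorithm with a smaller buffer can be viewed as one with buffer $(n-1)k-1$ that never uses the extra slots, so it is covered by the theorem. We also recheck the key facts of that proof for the chosen $k$. First, each value level has only $n-1$ copies among the $b'+1$ available items, where $b'=(n-1)k-1$. Second, the total value of all lower-level items is strictly below any single higher-level item. Together these imply that no nonempty batch drawn from the available items can be added to an equitable (EF) allocation while keeping it EF. The replenishment trick, which replaces each allocated good with an identical copy, keeps the available pool equivalent to a subset of $\{g_1,\dots,g_{b'+1}\}$. The conclusion follows as before: every EF update is followed by a non-EF update, and the first update is already non-EF.

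The main obstacle is the relation between the buffer size and $m$. Since $m$ is the total number of items and $b\le m-1$, the instance must contain at least $b+1$ items. We let the stream be long enough: at least $b'+1$ items plus the replacement copies. Strictly, $m$ is then determined by the adversary, and the statement ``buffer up to $m-1$'' is read as ``any fixed buffer size, on instances with more items than the buffer''. Under this reading the corollary holds, even though $\rho=n^{k-1}$ grows with $b$. That growth is exactly why the bound is stated in terms of $m$ and not $\rho$. The corollary confirms that Corollary~\ref{cor:buffer_additive}'s EF guarantee in the DAB model cannot hold for general additive valuations without some dependence of the buffer size on $\rho$.
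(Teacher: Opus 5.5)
Your proposal is correct and matches the paper's argument. The paper justifies this corollary only by noting that Theorem~\ref{thm:buffer-impossibility-batches} transfers to additive instances, where the number of distinct values is no longer bounded; taking $n=2$ and $k=m$ gives exactly the buffer size $(n-1)k-1=m-1$, and as you note, smaller buffers are covered by monotonicity. The ``obstacle'' about $m$ needs no reinterpretation: fix $m$ and stop the adversary's stream after $m$ items (with $k=m$ the initial $m$ items already suffice). The argument still holds because finalization batches are also nonempty subsets of the available pool, so the first update and every update after an EF one stay non-EF.
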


\begin{corollary}\label{cor:buffer-impossibility-single_item_additive}
Let $\beta \in (0,1]$ and $\mathcal{A}$ be a deterministic online algorithm in the SAS model that uses a buffer of size up to $\lfloor\sqrt{(m - 3)/6}\rfloor$. 
Then $\mathcal{A}$ cannot always maintain a $\beta$-temporal-\efo allocation for additive instances, even when $n=2$ and all items are goods or chores. 
\end{corollary}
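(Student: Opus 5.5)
The plan is to reduce the statement to Theorem~\ref{thm:buffer-impossibility-single_item} by viewing an arbitrary additive instance on $m$ items as a $k$-value instance with $k=m$. Any instance with $m$ items uses at most $m$ distinct values per agent, so it is in particular a $k$-value instance with $k=m$. Conversely, every $k$-value instance is an additive instance. Hence the adversarial family built in the proof of Theorem~\ref{thm:buffer-impossibility-single_item} for a fixed $k$ is already a family of additive instances. The only bookkeeping is to check that the length of the adversarial stream is compatible with the parameter $m$ appearing in the buffer bound.

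Concretely, fix $\beta\in(0,1]$, an algorithm $\mathcal{A}$ in the SAS model, and a buffer size $b\le\lfloor\sqrt{(m-3)/6}\rfloor$. Set $k=m\ge 3$, so $b\le\lfloor\sqrt{(k-3)/6}\rfloor$, and run the adversary $\texttt{Adv}_b(c)$ with $c\ge\max\{3,\lceil\beta^{-1}\rceil+1\}$. By the proof of Theorem~\ref{thm:buffer-impossibility-single_item}, the following hold:
\begin{itemize}
\item $\mathcal{A}$ violates $\beta$-\efo within at most $5b^2+3$ time steps;
\item the adversary uses at most $6b^2+3\le k$ distinct values;
\item there are only two agents, and all items are goods (or, symmetrically, all chores).
\end{itemize}
Since $5b^2+3\le 6b^2+3\le m$, the violation occurs within the first $m$ items. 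If the adversary's stream is shorter than $m$, pad it afterwards with arbitrary extra goods (respectively chores). This does not affect the earlier violation, because temporal-\efo must hold at every step and the violation has already happened. The resulting sequence is a genuine additive instance with exactly $m$ items.

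The only delicate point, and thus the main obstacle, is this counting consistency. The impossibility in Theorem~\ref{thm:buffer-impossibility-single_item} is parametrised by the number of values $k$, whereas the corollary is parametrised by the number of items $m$. I will verify explicitly that the number of time steps used by the adversary, $5b^2+3$, does not exceed $m$ when $b\le\sqrt{(m-3)/6}$, which holds since $6b^2+3\le m$. I will also check that the buffer constraint refers to the same $b$ throughout. Monotonicity in $b$ is not an issue, because the inductive construction handles every buffer size $b$ directly. Combining these observations yields that $\mathcal{A}$ cannot always maintain a $\beta$-temporal-\efo allocation for additive instances.
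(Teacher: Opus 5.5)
Your proposal is correct and follows the paper's route: the paper notes that the adversarial $k$-value instances of Theorem~\ref{thm:buffer-impossibility-single_item} are themselves additive instances, with $m$ taking the place of $k$, and your checks that $5b^2+3\le 6b^2+3\le m$ together with the padding argument simply make this transfer explicit. One harmless slip is that an arbitrary additive instance on $m$ items is only a \emph{personalized} $m$-value instance (two agents may jointly use up to $2m$ values), but your argument never uses that claim, only the direction that every $k$-value instance is additive.
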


Corollaries \ref{cor:buffer-impossibility-batches_additive} and \ref{cor:buffer-impossibility-single_item_additive}, in particular, suggest that large buffers are necessary in order to obtain strong guarantees, even for just two agents.

%%%%%%%%%%%%%%%%%%%%%%%%%%%%%%%%%%%%%%%%%
\section{Discussion and Open Questions}
\label{sec:discussion}
%%%%%%%%%%%%%%%%%%%%%%%%%%%%%%%%%%%%%%%%%
In this work, we introduced and systematically studied the problem of fair division of indivisible mixed manna among agents with additive valuations in an online setting where algorithms are equipped with buffers that can store and rearrange items.
We placed particular emphasis on $k$-value instances and showed that buffers of size linear in $k$ and in the number of agents suffice to obtain strong guarantees via novel combinatorial arguments. These results extend to general additive goods-only or chores-only instances, at the cost of some instance-dependent loss. In contrast to much of the existing literature---which circumvents strong impossibility results by severely restricting the space of instances---we instead enhance the power of online algorithms. Despite the generality of our approach, it opens up several interesting directions for future work.

A natural direction is to aim for positive results with smaller buffers by relaxing the fairness requirements, for example by targeting approximate temporal-\efo in the DAB model or approximate \efo only \textit{once every few steps} in the SAS model. Our impossibility results in Section~\ref{sub:impossibility_buffer} do not rule out such guarantees.
Another idea for obtaining strong results while reducing buffers is to combine them with lookahead. That is, allow online algorithms to look $\ell$ steps into the future, while maintaining a  buffer of size $b$, potentially significantly smaller than $\ell$. Again, this model is not captured by our impossibility results in Sections \ref{sec:impossibility_lookahead} or \ref{sub:impossibility_buffer}.
Finally, while our work focuses on envy-based  fairness notions, share-based notions---such as maximin share (MMS) fairness---are equally relevant. It would be interesting to investigate whether buffer-augmented online algorithms can achieve approximate temporal-MMS guarantees using buffers of comparable, or even smaller, size.

\subsection*{Acknowledgments}
    This work has been partially supported by project MIS 5154714 of the National Recovery and Resilience Plan Greece 2.0 funded by the European Union under the NextGenerationEU Program. 

\subsection*{Disclaimer}   
    This paper was prepared for information purposes and is not a product of HSBC Bank Plc.~or its affiliates. Neither HSBC Bank Plc.~nor any of its affiliates make any explicit or implied representation or warranty and none of them accept
    any liability in connection with this paper, including, but not limited to, the completeness, accuracy, reliability of information contained herein and the potential legal, compliance, tax
    or accounting effects thereof. Copyright HSBC Group 2026.

%%%%%%%%%%%%%%%%%%%%%%%%%%%%%%%%%%%%
% \bibliographystyle{apalike}
\bibliography{bibliography}
%%%%%%%%%%%%%%%%%%%%%%%%%%%%%%%%%%%%

%%%%%%%%%%%%%%%%%%%%%%%%%%%%%%%%%%%%
\newpage

\appendix
% \input{appendix}

%%%%%%%%%%%%%%%%%%%%%%%%%%%%%%%%%%%
\section{The Double Round-Robin Algorithm}
\label{sec:DRRisTEF1}
%%%%%%%%%%%%%%%%%%%%%%%%%%%%%%%%%%%

To  complete the correctness of  Algorithm \ref{alg:online-buffer} in the SAS model we need to argue that running the Double Round-Robin algorithm of~\citet{aziz2022fair} on the items left in the buffer can be turned into a temporal-EF1 allocation.

Let $B$ be the set of left-over items on which we apply Double Round-Robin.
The algorithm is based on two picking sequences. First, the items that give non-positive utility to every agent are allocated according to a fixed round-robin order, say $1, 2, \ldots,  n$. We call this set of items $B^- = \{g \in B : \forall i \in N,\ v_i(g) \le 0\}$. Dummy null items may be added to this set so that its size is a multiple of $n$. Then, the remaining items, namely those that give strictly positive utility to at least one agent, i.e, $B^+ = \{g \in B \mid \exists i \in N \text{ such that } v_i(g) > 0\}$,
are allocated according to the reverse round-robin order, $n, n-1, \ldots, 1$. In this second phase, if the current agent has no available item that gives her strictly positive utility, she pretends to pick a dummy item of value $0$ instead. Finally, all dummy items are removed, and the resulting allocation is returned. A pseudocode description of the algorithm is given in Algorithm 1 in~\cite{aziz2022fair}.

What is shown by \citet{aziz2022fair} is that its final allocation is EF1 for mixed manna.

\begin{theorem}[\citet{aziz2022fair}]\label{thm:aziz_DDR_is_EF1}
The Double Round-Robin algorithm returns an EF1 allocation.
\end{theorem}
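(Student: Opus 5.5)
The statement is Theorem~\ref{thm:aziz_DDR_is_EF1}: the final allocation of Double Round-Robin is \efo for mixed manna. I would fix an ordered pair of agents $i,j$ and show that $i$ is \efo towards $j$. The proof splits into the two phases: the round-robin phase on $B^-$ (order $1,\dots,n$, padded with dummy items so that every agent picks exactly $|B^-|/n$ times), and the reverse round-robin phase on $B^+$ (order $n,\dots,1$, with dummy zero-valued picks whenever no strictly positive item remains for the current picker). Write $A_i = C_i \cup D_i$, where $C_i\subseteq B^-$ and $D_i\subseteq B^+$.

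The first step is the standard round-robin comparison. Take any picking sequence in which $i$ picks at rounds $1,\dots,r$ and $j$ picks once per round. If $i$ precedes $j$ within each round, then $i$'s $t$-th pick is at least as good for $i$ as $j$'s $t$-th pick, since that item was still available when $i$ chose. If $j$ precedes $i$, the same holds after discarding $j$'s first pick. Two cases follow.

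\emph{Case $i<j$.} In phase 1, $i$ picks before $j$, so $v_i(C_i)\ge v_i(C_j)$. In phase 2, $j$ picks before $i$. I would pair $i$'s $t$-th pick with $j$'s $(t+1)$-th pick; this gives $v_i(D_i)\ge v_i(D_j\setminus\{g\})$, where $g$ is $j$'s first phase-2 pick. One detail needs care: $i$ only picks items that are strictly positive for $i$, with a dummy of value $0$ otherwise. If $i$ takes a dummy, then every remaining item is non-positive for $i$, so whatever $j$ later takes has value at most $0$ to $i$. The inequality therefore survives. Summing the two phases gives \efo by removing $g\in A_j$.

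\emph{Case $i>j$.} In phase 2, $i$ picks before $j$, so $v_i(D_i)\ge v_i(D_j)$ by the same argument. In phase 1, $j$ picks first, so $v_i(C_i\setminus\{h\})\ge v_i(C_j)$, where $h$ is $i$'s last pick in $B^-$. This follows by pairing $i$'s $t$-th pick with $j$'s $(t+1)$-th pick. All items in $B^-$ are non-positive for $i$, so removing $h$ only helps. Summing gives \efo by removing $h\in A_i$.

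Removing dummy items changes no values, and a removed item that happens to be a dummy only makes the inequality trivially true. I expect the main obstacle to be this dummy bookkeeping in phase 2: I must ensure that an agent never takes an item she values negatively there, and that the shifted pairing still compares like with like when the numbers of real picks differ.
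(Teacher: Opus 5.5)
Your proof is correct and is essentially the original argument of Aziz et al.; the paper does not reprove this theorem, it only cites it and uses it as a black box. Padding phase~1 to equal lengths and reversing the order in phase~2 means that, for each ordered pair, the one-item slack is needed in only one of the two phases, and your handling of dummy picks in phase~2 is what makes the shifted pairing work. One small wording fix: in the case $i>j$, non-positivity of $B^-$ is what lets you drop $j$'s \emph{first} phase-1 pick $j_1$ (i.e., $v_i(C_j)\le v_i(C_j\setminus\{j_1\})$), whereas $h$ is simply $i$'s unpaired last pick.
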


Theorem \ref{thm:aziz_DDR_is_EF1} is enough for completing the proof of Theorem \ref{thm:dab_positive} but for Theorem \ref{thm:sas_positive} we need something stronger; namely that the output of Double Round-Robin can be turned into a temporal-EF1 allocation.
However, since Double Round-Robin allocates the items sequentially and we have all items in $B$ available offline, it suffices to show that during its execution Double Round-Robin maintains an \efo allocation in every step. 
In the following lemma we  show exactly this.

\begin{lemma}\label{lem:DRR_is_TEF1}
    Double Round-Robin run on a set $S$ of items builds its output allocation one item at a time and every intermediate partial allocation is EF1.
\end{lemma}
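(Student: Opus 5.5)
We need to show that every prefix of the Double Round-Robin picking sequence is EF1. The plan is to fix an arbitrary intermediate time, after some number of picks in phase 1 (the round-robin over $B^-$ in order $1,\dots,n$) or in phase 2 (the reverse round-robin over $B^+$ in order $n,\dots,1$). We then verify EF1 for every ordered pair $(i,j)$ by a standard charging argument that pairs picks of $j$ with picks of $i$. Dummy items have value $0$ for everyone, so they affect neither envy nor the EF1 removal condition. We may therefore reason about the padded sequences and delete the dummies at the end.

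\emph{Phase 1 prefixes.} Every item in $B^-$ is non-positive for all agents. Consider $i$ and $j$, and suppose $i$ comes before $j$ in the order $1,\dots,n$. In each completed round, $i$'s pick happens while $j$'s later pick is still available, so $v_i(\text{$i$'s pick}) \ge v_i(\text{$j$'s pick})$. In the current, possibly incomplete, round $i$ has at most one extra pick, and removing it from $A_i$ (it is a chore for $i$) restores the inequality.

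Now suppose $i$ comes after $j$. Then $j$ may be ahead by one pick in the current round. Pair each of $i$'s picks with $j$'s pick in the \emph{next} round: when $i$ picked, that later item was still available, so $i$'s pick is at least as good for $i$. Under this pairing $i$ has at most one unmatched item, namely its last pick, whose partner has not yet been chosen. That item is non-positive, so removing it from $A_i$ gives $v_i(A_i\setminus\{g\}) \ge v_i(A_j)$. Note that $j$'s first pick is simply ignored, which is safe because it is non-positive for $i$. So every prefix of phase 1 is EF1.

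\emph{Phase 2 prefixes.} Write $A_i = C_i \cup D_i$, where $C_i$ is $i$'s completed phase-1 bundle and $D_i$ is its phase-2 bundle so far. We reuse the accounting from \citet{aziz2022fair}, but applied to a prefix.

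Case (a): $i$ is before $j$ in phase 1, hence after $j$ in phase 2. From phase 1 we get $v_i(C_i) \ge v_i(C_j)$, because $|B^-|$ is padded to a multiple of $n$ and every round is complete. In phase 2, $j$ picks first in each round, so we pair $i$'s $r$-th phase-2 pick with $j$'s $(r+1)$-th. Each of $i$'s picks is either its favourite remaining item or a dummy chosen because nothing positive remained, so it is at least as good for $i$ as the later item. This leaves $j$'s first phase-2 pick $g$ unmatched. Removing $g$ from $A_j$, which is allowed since $g$ is taken from the envied bundle, gives $v_i(A_i) \ge v_i(A_j\setminus\{g\})$. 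If $g$ is not positive for $i$, nothing needs to be removed at all.

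Case (b): $i$ is after $j$ in phase 1, hence before $j$ in phase 2. Phase 1 gives $v_i(C_i \setminus \{c\}) \ge v_i(C_j)$, where $c$ is $i$'s last phase-1 pick, or $c$ is taken from the round pairing shifted as above. In phase 2, $i$ picks before $j$ in every round, so $i$'s picks dominate $j$'s pick by pick, including in the current partial round, where $j$ has at most as many picks as $i$. Summing the two gives $v_i(A_i\setminus\{c\}) \ge v_i(A_j)$, with $c$ a non-positive item of $A_i$.

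One subtlety needs checking in the phase-2 domination argument. Agent $i$ may pick a dummy when $j$ later picks an item that is positive for $j$. This is harmless: $i$ takes a dummy only when no remaining item is positive for $i$, and at that moment everything still available, including $j$'s later pick, has value at most $0$ for $i$.

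\emph{Main obstacle.} The key step is case (b). The single-item removal from phase 1 must be the same item that serves the whole combined bundle, and the phase-2 partial round must not create an extra surplus for $j$. I expect to check carefully that, in an incomplete reverse round, any agent $j$ after $i$ in phase 2 has at most as many phase-2 picks as $i$, and likewise for the pairing in case (a). Once these indexing facts are in place, the argument is a direct adaptation of the original proof in \citet{aziz2022fair} to prefixes.
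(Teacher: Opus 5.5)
Your proposal is correct, but it takes a different route from the paper.

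\textbf{The two routes.} The paper does no new accounting. It uses Theorem~\ref{thm:aziz_DDR_is_EF1} as a black box and argues that the prefix allocation $\mathcal{A}^q$ is exactly the output of Double Round-Robin run on the restricted item set $S^q=\{g_1,\ldots,g_q\}$, so EF1 is inherited. You instead re-prove EF1 for every prefix directly. For phase-1 prefixes you use the plain round-robin argument for chores. For phase-2 prefixes you split $A_i=C_i\cup D_i$ and handle the two ordering cases. The pairings all go through: $i$'s $r$-th pick is charged against $j$'s $r$-th or $(r+1)$-th pick, and the removed item is $j$'s first phase-2 pick in case (a) and $i$'s last phase-1 pick in case (b).

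\textbf{What each buys.} The paper's route is shorter. Yours is self-contained and also more robust. Identifying $\mathcal{A}^q$ with a run on $S^q$ is sound for phase-2 prefixes, because $B^-$ and its padding are then unchanged. For strict phase-1 prefixes it can fail, for two reasons: the number of padding dummies depends on $|S^q|$, and a dummy of value $0$ is strictly preferred to any strictly negative chore. For example, take $n=2$, $B^-=\{a,b\}$, and both agents valuing $a,b$ at $-1,-2$. The real run gives $a$ to agent~1. Double Round-Robin on $\{a\}$ instead pads one dummy, agent~1 takes it, and $a$ goes to agent~2. So the restricted run need not reproduce the prefix, and your explicit phase-1 argument covers exactly the case where the black-box reduction is delicate.

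\textbf{One small point to state explicitly.} In case (a), if $i$ and $j$ have the same number of phase-2 picks, then $i$'s last phase-2 pick is also unmatched. This is harmless, because every phase-2 pick of $i$ is either strictly positive for $i$ or a dummy.
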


\begin{proof}
We use Theorem \ref{thm:aziz_DDR_is_EF1} as a black box. 
Consider the set $S$ of items and let $z=|S|$.
Let
\[
    \pi=(g_1,\ldots,g_z)
\]
be the sequence of real items (i.e., items of $S$) allocated by the algorithm, in the order in which
they are selected. For every $q \leq z$, let $\mathcal{A}^q=(A^q_1,...,A^q_n)$ be the allocation after the
first $q$ real items of $\pi$ have been allocated.

We show that $\mathcal{A}^q$ is EF1 for all $q \in \{1,...,z\}$. Let
% \[
    $S^q=\{g_1,\ldots,g_q\}$
% \]
be the set of real items allocated up to this point. The main observation is
that $\aaa^q$ is exactly the allocation produced by Double Round-Robin on the
restricted instance with item set $S^q$, with the same tie-breaking as in the
original execution.

Indeed, this restricted instance is obtained from the original one by deleting
the suffix $g_{q+1},\ldots,g_z$ of the allocation sequence. This does not change
any of the choices made before $g_q$ is allocated. Whenever an agent selects an
item, she selects a most preferred available item from the relevant set of
remaining items, or a dummy item. Hence, after deleting only items that would be selected later,
the item selected at each earlier step is still a valid choice. 
Moreover, the partition of the items into sets $B^+$ and $B^-$ in the description of the algorithm is item specific. Therefore, deleting other items does not
change whether a remaining item belongs to the first or the second phase of the
algorithm. Dummy null items have value $0$ for every agent and are removed at
the end, so they do not affect the allocation of real items.

Thus, the partial allocation $\mathcal{A}^q$ is the same as the final allocation that
Double Round-Robin would return,  restricted on the set of items $S^q$. By
Theorem \ref{thm:aziz_DDR_is_EF1}, this allocation is EF1. 
Since this holds for any arbitrary $q$, every prefix of the allocation sequence is EF1, and the lemma follows.
\end{proof}

%%%%%%%%%%%%%%%%%%%%%%%%%%%%%%%%%%%%

\end{document}